\documentclass[journal,comsoc]{IEEEtran}

\usepackage[T1]{fontenc}
\usepackage{subfigure}
\usepackage{tabularx}
\usepackage{xcolor}
\usepackage{multirow}
\usepackage{booktabs}
\usepackage{array}
\usepackage{amsmath,amssymb,amsfonts,bm}
\usepackage[english]{babel}
\usepackage{amsthm}
\usepackage{cite}
\usepackage[pdftex]{graphicx}
\usepackage[normalem]{ulem}
\usepackage[cmintegrals]{newtxmath}
\usepackage{algorithmic}
\usepackage{multicol}
\usepackage{algorithm}
\usepackage{verbatim}
\usepackage{makecell}
\usepackage[pagewise]{lineno}
\usepackage{upgreek}

\newcolumntype{L}[1]{>{\raggedright\let\newline\\\arraybackslash\hspace{0pt}}m{#1}}
\newcolumntype{C}[1]{>{\centering\let\newline\\\arraybackslash\hspace{0pt}}m{#1}}
\newcolumntype{R}[1]{>{\raggedleft\let\newline\\\arraybackslash\hspace{0pt}}m{#1}}

\theoremstyle{plain}
\newtheorem{theorem}{Theorem}
\newtheorem{lemma}{Lemma}

\newtheorem{proposition}{Proposition}
\theoremstyle{remark}
\newtheorem{remark}{Remark}

\newcommand{\keff}{\kappa_2^{\mathrm{eff}}}

\begin{document}
\abovedisplayshortskip=1pt
\belowdisplayshortskip=1pt
\abovedisplayskip=1pt
\belowdisplayskip=1pt
\textfloatsep=1pt
\floatsep=1pt
\intextsep=1pt

\setcounter{figure}{0}
\renewcommand{\figurename}{Fig.}
\renewcommand{\thefigure}{\arabic{figure}}

\title{Full-Frame OFDM-ISAC: Joint Constellation, Pilot, and Receive-Filter Design}

\author{Dongil~Yang,~\IEEEmembership{Student Member,~IEEE,}
Kaitao~Meng,~\IEEEmembership{Member,~IEEE,}
Fan~Liu,~\IEEEmembership{Senior Member,~IEEE,}
Christos~Masouros,~\IEEEmembership{Fellow,~IEEE,} and
Kawon~Han,~\IEEEmembership{Member,~IEEE}
\thanks{D. Yang and K. Han are with the Department of Electrical
Engineering, Ulsan National Institute of Science and Technology (UNIST),
Ulsan, South Korea (emails:\{dongil1223, kawon.han\}@unist.ac.kr).}
\thanks{K. Meng is with the Department of Electrical and Electronic
Engineering, University of Manchester, Manchester, UK.}
\thanks{F. Liu is with the National Mobile Communications Research Laboratory, School of Information Science and Engineering, Southeast University, Nanjing, China.}
\thanks{C. Masouros is with the Department of Electronic and Electrical
Engineering, University College London, London, UK.}}

\maketitle

\begin{abstract}
Orthogonal frequency-division multiplexing (OFDM) integrated sensing and communication (ISAC) reuses the entire time-frequency frame, including pilots and data payloads, for radar sensing. The random data payload, however, reduces the sensing dynamic range and degrades the target detection and parameter estimation performance. Existing studies typically optimize the modulation constellation, the pilot pattern, the power allocation, and the receive filter in isolation, overlooking their coupled effects on the sensing ambiguity structure. To address this gap, we propose a joint design of the constellation, the pilot placement and power split, and the mismatched receive filter for full-frame OFDM-ISAC systems, which minimizes the residual interference inside declared regions of interest (ROIs) under a communication rate constraint. For a given transmit signaling scheme, the receive filter admits a closed-form solution. With this filter in place, we derive closed-form expressions for the receiver-specific interference floor and noise enhancement over the 2-D range-Doppler domain, which show that the floor is governed by a single effective statistic of the full-frame signaling. This law reduces the joint design to two steps. A balanced staggered pilot placement removes the deterministic grating lobes of pilot-power boosting, and the shaped constellation and the power split are then selected jointly on the boundary of the rate constraint under the channel estimation error, which attains the joint optimum over the constructed shaping levels to first order in the declared ROI size. Under a quality-of-service (QoS) constraint of $98\%$ of the rate of a conventional 16-QAM frame, the proposed design significantly lowers the interference floor, improves the target detection probability, and achieves better ranging accuracy compared to the conventional receive filter-aware design.
\end{abstract}

\begin{IEEEkeywords}
Constellation shaping, integrated sensing and communication (ISAC), mismatched filtering, orthogonal frequency division multiplexing (OFDM), pilot design.
\end{IEEEkeywords}

\IEEEpeerreviewmaketitle

\vspace{-1\baselineskip}
\section{Introduction}

\IEEEPARstart{I}{ntegrated} sensing and communication (ISAC) has emerged as a key enabling technology for future wireless networks, in which the communication infrastructure is expected not only to provide high-rate connectivity but also to acquire environmental information for localization, tracking, mapping, and situational awareness \cite{liu2022integrated,liu2020joint0,zhang2021overview,luo2025isac}. Among candidate waveforms, orthogonal frequency-division multiplexing (OFDM) is particularly attractive for ISAC due to its robustness to frequency-selective channels, its compatibility with existing cellular standards, and its simple delay-Doppler processing through two-dimensional Fourier transforms \cite{sturm2011waveform,liu2025cp}. These properties allow an OFDM transceiver to reuse the same time-frequency resources for both communication and sensing, thereby avoiding the spectral inefficiency of dedicated radar transmissions.

A communication-centric OFDM-ISAC frame contains not only deterministic pilot symbols but also random data payloads. Conventional sensing approaches typically rely on the deterministic known pilots, whose regular structure allows simple deterministic processing \cite{zhu2023pilot,hua2024integrated}. Such pilot-only sensing, however, discards most of the signaling resources as data payloads occupy a large portion of the frame. For example, in a 5G NR PDSCH configuration for high-mobility links, with a normal cyclic prefix and a single-symbol Type-1 DM-RS with three additional DM-RS positions, pilots occupy approximately $14\%$ of the resource elements in each physical resource block \cite{3gpp38211}. In contrast, payload-based sensing can utilize more OFDM resources, which are intrinsically available at a monostatic sensing receiver. This substantially increases the coherent integration gain without imposing additional sensing overheads, and is therefore especially attractive for communication-centric ISAC \cite{du2024reshaping,liu2025uncovering,han2026constellation,xu2025exploiting, han2026next}.

The use of random payloads nevertheless introduces a fundamental limitation compared to sensing-centric waveforms. Unlike a constant-modulus radar sequence, a practical quadrature amplitude modulation (QAM) or amplitude-phase-shift keying (APSK) payload has symbol-dependent amplitude fluctuations. They appear as data-dependent sidelobes of the ambiguity function around every scatterer in the range-Doppler response \cite{liu2025cp,lu2024random,liu2025uncovering}. Although the matched filtering (MF) is optimal for signal-to-noise ratio (SNR), its output retains a payload-dependent sidelobe floor whose level is dictated by the fourth-order moment of the constellation \cite{han2026constellation,meng2026constellation}. As a result, a dominant clutter return produces a broad interference floor that masks targets regardless of the noise level.

Several sensing receive filters have been considered to address this problem. The reciprocal filter (RF) removes the payload modulation by exactly inverting the transmitted symbols \cite{wojaczek2018reciprocal,rodriguez2023supervised, han2025sensing}, suppressing the data-dependent sidelobes at the cost of an SNR loss due to amplified noise, of which performance is governed by the inverse second-order moment of the constellation \cite{meng2026constellation} and becomes substantial when constellation points approach the origin. On the other hand, the Wiener filter is the linear minimum mean-square error (LMMSE) estimator of the sensing channel response. It regularizes the symbol inversion by the target SNR and thereby interpolates between the MF and the RF, trading residual data-dependent sidelobes against noise enhancement \cite{rodriguez2023supervised,keskin2025fundamental}. Importantly, neither filter chooses where the sidelobe suppression is aimed, and the SNR loss is incurred over the whole frame although detection depends only on the range-Doppler cells that cover the region of interest (ROI). Within the same family, the region-of-interest mismatched filter (ROI-MMF) \cite{yang2026constellation} concentrates its degrees of freedom on the delay region in which weak targets are expected, preserves the mainlobe response. This approach has been shown that the constellation dependency for range estimation is reduced when the ROI is small enough.

In parallel, a body of work on receiver-aware transmit design has tuned the modulation geometry or its input distribution to the deployed sensing receiver. Specifically, these designs match the fourth-order moment that sets the MF sidelobe level \cite{han2026constellation,geiger2026constellation}, the inverse second-order moment that governs the RF noise enhancement \cite{meng2026constellation}, or the estimation accuracy attained by the receiver \cite{yang2026constellation}. The resulting sensing-rate trade-off has been characterized for random waveforms \cite{yang2024constellation} and for finite-constellation probabilistic shaping \cite{liu2025probabilistic}. These designs optimize the payload constellation alone, however, and treat the pilot structure as given. On the pilot side, boosting the pilot power improves channel estimation but reduces the power available to the payload symbols \cite{zhu2023pilot,hua2024integrated}. Moreover, a regular pilot comb facilitates channel estimation and deterministic pilot-based sensing \cite{sturm2011waveform,keskin2021mimo}, yet under pilot-power boosting its periodic power pattern imposes deterministic grating artifacts on the range-Doppler domain. The strong influence of the pilot pattern on the OFDM sensing response has motivated recent pilot-pattern designs \cite{zhang2024cross,bouziane2026optimized,wu2026ambiguity}. Beyond these separate refinements, recent studies have begun to exploit pilots and payloads jointly for sensing, demonstrating clear gains over pilot-only processing \cite{xu2025exploiting,xie2025adaptive}.

Despite these advancements, existing works generally either optimize the receive filter for a fixed transmit frame or shape the transmit signal for a fixed receiver. As a result, constellation shaping, pilot-density selection, pilot/data power allocation, pilot placement, and mismatched receive-filter design have largely been treated as separate design problems \cite{geiger2026constellation,du2024reshaping,zhu2023pilot,hua2024integrated,zhang2025optimal,liu2025uncovering,mcaulay1971optimal}. Therefore, the joint design of the transmit frame and the sensing receiver remains an open problem. The key challenge is that these design variables are tightly coupled yet act on the sensing performance through different mechanisms. The constellation and the pilot-payload configuration determine the data-dependent interference statistics and the ambiguity structure, whereas the receive filter suppresses the resulting interference at the cost of noise enhancement. Jointly optimizing them therefore gives rise to a mixed discrete-continuous and highly nonconvex problem under simultaneous communication-rate and sensing-performance requirements. Consequently, there is a compelling need for a full-frame OFDM-ISAC design that jointly accounts for the communication rate, the sensing receiver, and the complete pilot-payload frame structure.

To address the above gap, we develop a receiver-aware full-frame OFDM-ISAC framework that jointly designs the modulation constellation, the pilot structure, and the receive filter under a prescribed communication-rate requirement. Leveraging a unified characterization of the post-filter interference for the MF, the RF, and the ROI-MMF, we separate the random contribution of the data payload from the deterministic contribution of the boosted pilot lattice, connect both to the detection and estimation performance, and derive the channel-estimation-aware achievable rate that closes the design loop on the communication side. The main contributions of this paper are summarized as follows.

\begin{figure}[t!]
    \centering
    {\includegraphics[width=0.5\textwidth]{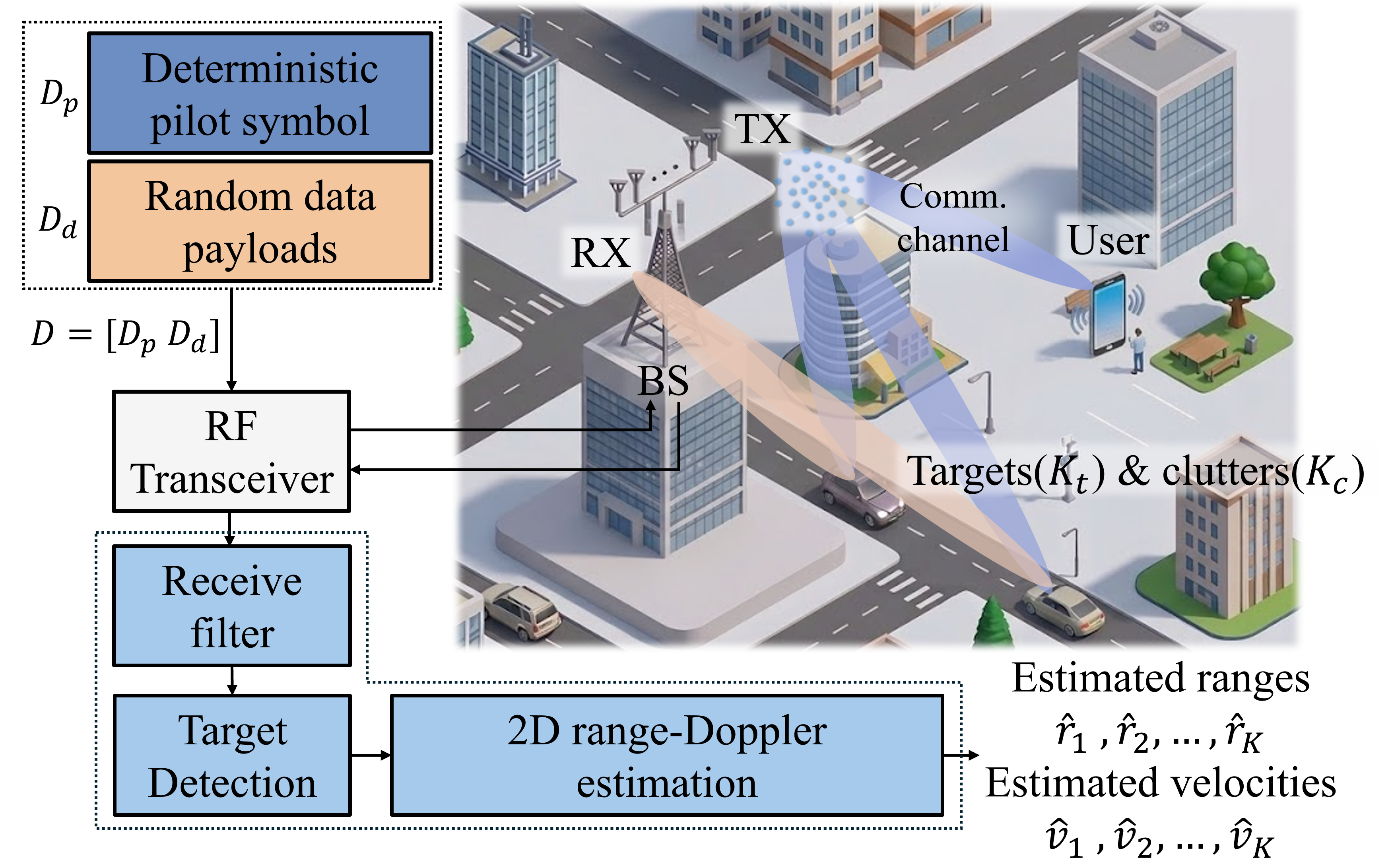}}
    \caption{System model of the communication-centric
    OFDM-ISAC system.}
    \label{f1}
\end{figure}

\begin{itemize}
\item We formulate the joint design of the modulation constellation, the pilot placement and power split, and the mismatched receive filter for full-frame OFDM-ISAC, in which the residual interference inside declared ROIs is minimized subject to a prescribed communication-rate requirement. the proposed formulation couples all four design variables, and we derive a closed-form decomposition of its objective, in which each variable acts through an identified term, namely the deterministic pilot ambiguity, the payload-fluctuation statistic, and the filter-noise energy.
\item We establish a unified receiver-aware analysis of full-frame OFDM sensing that covers both pilots and data payloads under a common mainlobe-gain normalization. Our results reveal that the payload-induced interference floor of the ROI-MMF is governed by a single statistic, the effective centered second moment of the full-frame OFDM signals, which couples the constellation, the pilot density, and the power split, and quantifies the ROI-MMF residual separately at the covered and uncovered lags.
\item Building on this analysis, we develop a three-step design procedure with an optimality guarantee. The receive filter is obtained in closed form for each transmitted frame. The pilot placement follows from a balanced staggered construction that removes the deterministic grating response caused by pilot-power boosting while preserving the orthogonality required for channel estimation. The constellation and the power split are then selected on the rate boundary through a channel-estimation-aware effective signal-to-interference-plus-noise ratio (SINR) and a quality of service (QoS)-based selection rule. We further establish that this sequential procedure attains the first-order joint optimum over the constructed shaping levels.
\end{itemize}

%The proposed framework is validated through extensive numerical simulations, in which the joint design is evaluated against the baseline under the declared QoS constraint and all shaped constellations are compared at matched mutual information. The closed-form floor predictions match the Monte Carlo measurements. The joint design lowers the payload-induced floor while preserving the QoS requirement, restores targets masked under conventional transmit and receive processing, advances the CFAR detection threshold relative to the conventional receive filters, and preserves CRB-tracking range estimation.

\textit{Notations}: Boldface lower-case and upper-case symbols denote vectors and matrices, respectively. $\mathbf{A}\in\mathbb{C}^{N\times M}$ and $\mathbf{B}\in\mathbb{R}^{N\times M}$ represent a complex-valued $N\times M$ matrix and a real-valued $N\times M$ matrix, respectively. The operators $(\cdot)^{T}$, $(\cdot)^{H}$, and $(\cdot)^{*}$ represent the transpose, Hermitian transpose, and conjugate, respectively. $\mathrm{diag}(\mathbf{a})$ denotes a diagonal matrix whose diagonal elements are given by the vector $\mathbf{a}$, and $\operatorname{vec}(\cdot)$ denotes column-wise vectorization. The operators $\odot$ and $\oslash$ represent the Hadamard product and the element-wise division, respectively. Finally $\mathbb{E}[\cdot]$ denotes the statistical expectation operator. 
\vspace{-1\baselineskip}
% =====================================================================
\section{System Model}\label{sec:sysmodel}

We consider a communication-centric OFDM-ISAC system, in which a base station (BS) simultaneously serves a communication user and senses multiple targets in a cluttered environment, as illustrated in Fig.~\ref{f1}. The communication and sensing functions share the same OFDM frame, which consists of deterministic pilot symbols and random data payloads. Since the monostatic sensing receiver is co-located with the transmitter, the transmitted frame is intrinsically known and is reused as a full-frame sensing reference. 

\begin{figure}[t!]
    \centering
    {\includegraphics[width=0.5\textwidth]{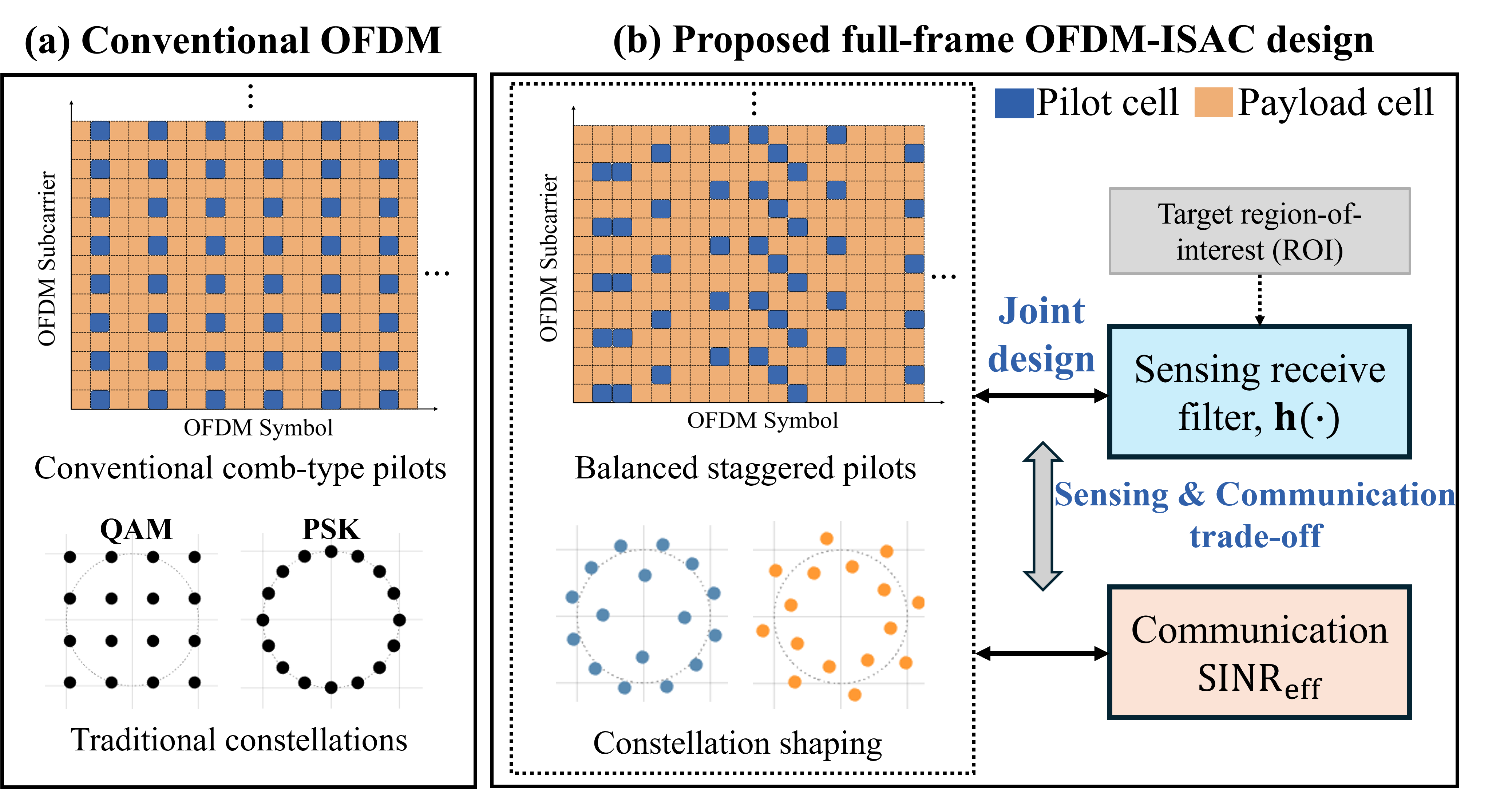}}
    \caption{(a) Conventional and (b) proposed full-frame OFDM transmit signaling structure.}
    \label{f2}
\end{figure}
\vspace{-0.7\baselineskip}
\subsection{OFDM Frame Structure}\label{sec:frame}

Consider an OFDM frame consisting of $N$ subcarriers and $M$ consecutive OFDM symbols, giving a total of $D=NM$ time-frequency resource elements. The subcarrier spacing is denoted by $\Delta f$, the bandwidth is $N\Delta f$, and the useful OFDM symbol duration is $T_{\rm u}=1/\Delta f$. With a cyclic-prefix (CP) duration $T_{\rm cp}$, the total OFDM symbol duration is $T_{\rm o}=T_{\rm u}+T_{\rm cp}$, and all delays of interest are assumed shorter than $T_{\rm cp}$, so the frame model contains only the $N\times M$ data-bearing elements free of inter-symbol interference.

Let $\mathbf B\in\{0,1\}^{N\times M}$ denote the pilot-placement matrix, where $B_{n,m}=1$ indicates that the resource element $(n,m)$ carries a pilot symbol and $B_{n,m}=0$ indicates a data symbol. The pilot and data index sets are respectively $\mathcal P=\{(n,m):B_{n,m}=1\}$ and $\mathcal D=\{(n,m):B_{n,m}=0\}$, with cardinalities $D_p=|\mathcal P|$ and $D_d=|\mathcal D|=D-D_p$. Accordingly, the pilot density is defined as
\begin{equation}
    \eta=\frac{D_p}{D}.
    \label{eq:pilot_density}
\end{equation}
Throughout, the pilots form a comb of density $\eta=1/f_p$, where $f_p$ is the comb factor. In the conventional vertical comb, one pilot occupies every $f_p$-th subcarrier of every OFDM symbol, and this structure serves as the reference placement in this paper. Beyond this reference, the placement itself is a design variable of this work, and the admissible placements are constrained to be balanced, i.e., every subcarrier carries the same number of pilots over the frame,
\begin{equation}
    \sum_{m=0}^{M-1}B_{n,m}=\frac{D_p}{N},
    \qquad n=0,\ldots,N-1.
    \label{eq:balanced_class}
\end{equation}
The resulting set of placements is denoted by $\mathcal B$. Note that the conventional vertical comb concentrates its pilots on $N/f_p$ subcarriers and therefore lies outside $\mathcal B$.

Let $\mathbf S_p\in\mathbb C^{N\times M}$ and $\mathbf S_d\in\mathbb C^{N\times M}$ denote the deterministic pilot-symbol matrix and the random data-symbol matrix, respectively. The pilot symbols carry a Zadoff-Chu (ZC) sequence, a constant-amplitude zero-autocorrelation waveform with $|[S_p]_{n,m}|=1$ for $(n,m)\in\mathcal P$. The nonzero elements of $\mathbf S_d$ are independently and equiprobably drawn from an $L$-ary constellation $\mathcal X=\{x_1,\ldots,x_L\}$, normalized such that $\mathbb E[|x|^2]=1$. 

The constellation is constrained to be invariant under a quarter-turn rotation, that is, $x\in\mathcal X$ if and only if $jx\in\mathcal X$, which is referred to as $C_4$ symmetry \cite{boutillon2024constellations}. The points therefore occur in quadruples $\{x,jx,-x,-jx\}$, and since the symbols are equiprobable, this symmetry implies
\begin{equation}
    \mathbb E[x]=0,
    \qquad
    \mathbb E[x^2]=0.
    \label{eq:c4_moments}
\end{equation}
Quarter-turn symmetric constellations are known to retain near-capacity mutual information under geometric shaping \cite{boutillon2024constellations}, so the constraint sacrifices little shaping freedom while reducing the free design variables to $L/4$ points. On the sensing side, \eqref{eq:c4_moments} reduces every payload statistic that enters the sensing analysis to a power statistic. Specifically, for two distinct cells the independence and $\mathbb E[x]=0$ give $\mathbb E[x_nx_m^{*}]=\mathbb E[x_nx_m]=0$, and within a cell $\mathbb E[x^2]=0$ removes the remaining phase term, so the only surviving second-order quantity is the power $\mathbb E[|x|^2]$. Fig.~\ref{f2} contrasts the two transmit structures. The comb-type pilots and the traditional constellations of Fig.~\ref{f2}(a) are replaced by the balanced staggered pilots and the shaped constellations of Fig.~\ref{f2}(b), and these two transmit variables are designed jointly with the ROI-aware sensing receive filter under the sensing-communication trade-off, which is the subject of this paper.

Finally, $p_p$ and $p_d$ denote the average powers assigned to a pilot and a data resource element, respectively \cite{vsimko2012optimal,simko2013adaptive}. The transmitted OFDM frame is then given by
\begin{equation}
    \mathbf X
    =\sqrt{p_p}\,\mathbf B\odot\mathbf S_p
    +\sqrt{p_d}\,(\mathbf 1-\mathbf B)\odot\mathbf S_d,
    \label{eq:transmit_frame}
\end{equation}
subject to the normalized average-power constraint
\begin{equation}
    \eta p_p+(1-\eta)p_d=1,
    \label{eq:frame_power_constraint}
\end{equation}
which guarantees $\mathbb E[\|\mathbf X\|_F^2]=D$. To parameterize the pilot/data power allocation, we define the fraction of the total frame energy assigned to the pilots as $\gamma=\eta p_p$, so that
\eqref{eq:frame_power_constraint} yields
\begin{equation}
    p_p=\frac{\gamma}{\eta},
    \qquad
    p_d=\frac{1-\gamma}{1-\eta}.
    \label{eq:pilot_data_power}
\end{equation}
The uniform-power allocation corresponds to $\gamma=\eta$, for which $p_p=p_d=1$, whereas $\gamma>\eta$ represents pilot-power boosting at the expense of the data-symbol power. Standard-compliant per-element boosting is upper-bounded by $p_p/p_d\le B_{\rm st}$, which restricts $\gamma\in[\eta,\gamma_{\max}]$ with
\begin{equation}
    \gamma_{\max}=\frac{B_{\rm st}\,\eta}{1+(B_{\rm st}-1)\eta}.
    \label{eq:boost_cap}
\end{equation}

The sensing behavior of the random payload is characterized by two statistical moments of the normalized constellation, defined as
\begin{equation}
    \mu_4=\mathbb E\big[|x|^4\big],
    \qquad
    \nu_{-2}=\mathbb E\big[|x|^{-2}\big],
    \label{eq:moments}
\end{equation}
where $\mu_4$ is the fourth-order moment (kurtosis) of the constellation and $\nu_{-2}$ is the inverse second-order moment. As established in \cite{han2026constellation,meng2026constellation,yang2026constellation}, $\mu_4$ governs the payload-induced interference of the MF, and $\nu_{-2}$ governs the noise enhancement of the RF. The inverse moment $\nu_{-2}$ is finite whenever the constellation does not contain the origin. For a constant-modulus constellation such as phase-shift keying (PSK), $\mu_4=\nu_{-2}=1$, whereas any nonconstant-modulus constellation satisfies $\mu_4>1$ and $\nu_{-2}>1$. For compact notation we define the vectorized transmit frame
$\mathbf x=\operatorname{vec}(\mathbf X)\in\mathbb C^{D}$.
\vspace{-0.7\baselineskip}
\subsection{Communication System Model}\label{sec:comm_model}

The communication user experiences a frequency-selective channel with $K_h$ effective taps, $\mathbf g=[g_0,\ldots,g_{K_h-1}]^T\sim\mathcal{CN}(\mathbf 0,\sigma_h^2\mathbf I_{K_h})$, whose delay spread falls within the CP. The user link is modeled as quasi-static over one frame, with the common carrier and Doppler offset of the link handled by receiver synchronization as in standard OFDM. The channel frequency response on subcarrier $n$ is given by
\begin{equation}
    H_n=\sum_{\ell=0}^{K_h-1}g_\ell\,e^{-j2\pi n\ell/N},
    \label{eq:cfr}
\end{equation}
and the received signal on resource element $(n,m)$ is given by
\begin{equation}
    y_{n,m}=H_n\,X_{n,m}+w_{n,m},
    \label{eq:comm_rx}
\end{equation}
where $w_{n,m}\sim\mathcal{CN}(0,\sigma_c^2)$ denotes the additive white Gaussian noise (AWGN) at the communication receiver. The tap variance is normalized to unit average channel gain, $K_h\sigma_h^2=1$, so that $\mathbb E[|H_n|^2]=1$ on every subcarrier, and the per-element communication SNR at unit symbol power is $\mathrm{SNR}_c=1/\sigma_c^2$. Substituting \eqref{eq:cfr} into \eqref{eq:comm_rx} at a pilot cell $(n,m)\in\mathcal P$ gives
\begin{equation}
    y_{n,m}
    =\sum_{\ell=0}^{K_h-1} g_\ell\,X_{n,m}\,e^{-j2\pi n\ell/N}+w_{n,m},
    \label{eq:pilot_obs}
\end{equation}
which is linear in $\mathbf g$. Stacking the $D_p$ pilot observations of \eqref{eq:pilot_obs} gives $\mathbf y_p=\bm\Omega\,\mathbf g+\mathbf w_p$, where the row of $\bm\Omega\in\mathbb C^{D_p\times K_h}$ associated with the pilot cell $(n,m)$ has the entries $[\bm\Omega]_{(n,m),\ell}=X_{n,m}\,e^{-j2\pi n\ell/N}$, and $\mathbf w_p\in\mathbb C^{D_p}$ collects the noise samples of the pilot cells. The receiver estimates $\mathbf g$ via the LMMSE estimator
\begin{equation}
    \hat{\mathbf g}
    =\sigma_h^2\bm\Omega^H
     \big(\sigma_h^2\bm\Omega\bm\Omega^H+\sigma_c^2\mathbf I\big)^{-1}
     \mathbf y_p,
    \label{eq:lmmse}
\end{equation}
with error covariance $\mathbf C_\epsilon=(\sigma_h^{-2}\mathbf I +\sigma_c^{-2}\bm\Omega^H\bm\Omega)^{-1}$ \cite{morelli2001comparison}. The estimation error propagates into the payload demodulation as residual interference, and thereby couples the pilot design to the achievable rate approximation of the payload link. The effective rate approximation of the payload link is $R(\eta,\gamma,\mathcal X)=(1-\eta)\,I_{\mathcal X}\big(\mathrm{SINR}_{\rm eff}(\gamma,\eta)\big)$, where $I_{\mathcal X}$ is the constellation-constrained mutual information and $\mathrm{SINR}_{\rm eff}$ is the effective SINR of the payload link under the channel estimation error. The reference rate $R_{\rm base}$ is the rate at the same pilot density with 16-QAM at uniform power\footnote{Since we focus on the power split between pilots and data payloads, per-subcarrier power allocation is not considered. We refer the readers to \cite{zhang2025optimal} for the details of per-subcarrier power allocation design.}. 
\vspace{-0.7\baselineskip}
\subsection{Sensing System Model and Receive Processing}
\label{sec:sensing_model}

After CP removal and a per-symbol $N$-point FFT, the monostatic sensing receiver, which knows $\mathbf x$, observes the received time-frequency frame
\begin{equation}
    \mathbf{Y}=\sum_{i=1}^{K}\alpha_i\,
    \mathbf{X}\odot\bm{\Phi}(q_i,k_i)+\mathbf{W},
    \label{eq:sensing_rx}
\end{equation}
where the range-Doppler phase matrix is given by
\begin{equation}
    [\bm{\Phi}(q,k)]_{n,m}
    = e^{-j2\pi(\frac{qn}{N}+\frac{km}{M})}.
    \label{eq:steering}
\end{equation}
The sum in \eqref{eq:sensing_rx} runs over the $K=K_t+K_c$ scatterers illuminated by the frame, comprising $K_t$ targets of interest and $K_c$ clutter sources. Unlike the user link, each echo carries its own scatterer-specific Doppler shift, which cannot be removed by a common synchronization and is instead a parameter to be estimated. Here, $\alpha_i$ is the complex amplitude of the $i$-th scatterer, which absorbs the path loss and the radar cross-section, $(q_i,k_i)$ denotes its range-Doppler cell, not necessarily integer-valued, and $\mathbf W\sim\mathcal{CN}(0,\sigma^2\mathbf I)$ represents the AWGN at the sensing receiver. The cell coordinates are the delay and the Doppler shift of the scatterer normalized by the cell sizes, that is, $q_i=\tau_i N\Delta f$ and $k_i=f_{d,i}MT_{\rm o}$. The corresponding range and radial velocity are $r_i=q_i\,c/(2N\Delta f)$ and $v_i=k_i\,c/(2f_cMT_{\rm o})$, where $c$ is the speed of light and $f_c$ is the carrier frequency. All scatterer delays are assumed to fall within the CP and all Doppler shifts to satisfy $f_d\ll\Delta f$, so that the channel acts elementwise on the time-frequency grid as in \eqref{eq:sensing_rx} without inter-carrier interference \cite{keskin2021mimo,xu2025does}. 

With $\mathbf y=\operatorname{vec}(\mathbf Y)$ and $\bm\phi(q,k)=\operatorname{vec}(\bm\Phi(q,k))$, the receiver applies a frame-level receive-filter $\mathbf h\in\mathbb C^D$, normalized to $\mathbf h^H\mathbf x=D$ so that a scatterer's mainlobe response is exactly its amplitude. Two conventional choices of $\mathbf h$ serve as fixed baselines throughout, namely the classical MF and RF \cite{rodriguez2023supervised,meng2026constellation},
\begin{equation}
    \mathbf h_{\rm MF}=\mathbf x,
    \qquad
    \mathbf h_{\rm RF}=\mathbf 1\oslash\mathbf x^{*},
    \label{eq:baselines}
\end{equation}
where the RF meets this normalization identically, and the MF scaling $D/\|\mathbf x\|^2$ concentrates at unity as the frame grows. Both filters are applied elementwise, so that their cost grows only linearly with the frame size $D$, and both are fixed processing rules, since no design freedom remains at the receiver once the frame is transmitted. 
\vspace{-0.7\baselineskip}
\subsection{Problem Formulation}
\label{sec:problem}

We formulate the joint transmit full-frame OFDM signaling and receive-filter design for the considered OFDM-ISAC system. Because the payload is communication data, the transmitter does not choose individual symbols, but it does choose the statistics from which they are drawn and the frame in which they are carried. Three quantities are therefore at its disposal, namely the modulation constellation $\mathcal X$, the pilot placement $\mathcal P$, and the power split $\gamma$. They obey the constraints of the frame model, that is, the $C_4$ symmetry and the minimum spacing $d_{\min}(\mathcal X)\ge d_0$ on the constellation, the latter protecting the demodulation distance, the balance condition \eqref{eq:balanced_class} on the placement, and the boost range \eqref{eq:boost_cap} on the power split. On the receive side, the filter may depend on the frame that was actually sent and is recomputed for every frame. The receive variable is therefore the map $\mathbf h(\cdot)$ rather than a single fixed vector.

The design takes three prescribed inputs, namely the pilot density $\eta$, the per-element communication SNR, $\mathrm{SNR}_c=1/\sigma_c^2$ of \eqref{eq:comm_rx}, and a QoS requirement $\zeta$, which is the fraction of a reference rate that the design must guarantee for the communication link. With these fixed, we pose
\begin{subequations}\label{eq:p1all}
\begin{align}
(\mathcal{P}_1)\,\min_{\mathcal X,\,\mathcal P,\,\gamma,\,\mathbf h(\cdot)}
&\;\mathbb E_{\mathbf x}\big[\underbrace{\lambda\|\mathbf h(\mathbf x)\|^2}_{\text{noise energy}}
+\!\!\underbrace{\sum_{(q,k)\in\mathcal S}\!\!
|\mathbf h(\mathbf x)^H(\mathbf x\odot\bm\phi(q,k))|^2}_{\text{ROI sidelobe residual}}\big]
\label{eq:master}\\
\text{s.t.}\;
&\;\mathcal X\in C_4,\quad \mathcal P\in\mathcal B,\quad
\gamma\in[\eta,\gamma_{\max}],
\label{eq:setcon}\\
&\;\mathbf h(\mathbf x)^H\mathbf x=D\quad\forall\,\mathbf x,
\label{eq:gaincon}\\
&\;R(\eta,\gamma,\mathcal X)\ge\zeta R_{\rm base}.
\label{eq:ratecon}
\end{align}
\end{subequations}
Here, $\mathcal S$ is the notch set, which collects the pairwise relative shifts between the declared ROI cells and the scatterer centers and excludes the zero shift, so that all mainlobes are preserved. The scatterer centers that define $\mathcal S$ are assumed available from the detections of the preceding frames, as is standard in region-of-interest processing \cite{yang2026constellation}. The weight $\lambda>0$ balances the filter noise energy against the residual left at those shifts. The unit-gain constraint \eqref{eq:gaincon} holds for every realization of the payload. The objective in \eqref{eq:master} is the post-filter interference-plus-noise power inside the declared ROIs, the noise energy passed by the filter plus the sidelobe residual left at the notched shifts. The unit-gain constraint \eqref{eq:gaincon} enforces the peak response of a scatterer to equal its own amplitude, so that the residual cannot be reduced by shrinking the mainlobe instead of suppressing the sidelobes. Finally, the rate constraint \eqref{eq:ratecon} keeps the communication service at the prescribed level.
\vspace{-0.7\baselineskip}
\subsection{Optimal Sensing Receiver for a Specific OFDM Frame}
\label{sec:specific}

Although $(\mathcal{P}_1)$ couples four heterogeneous variables, the minimization over the receive filter $\mathbf h(\cdot)$ admits an exact solution. Since both the objective and the unit-gain constraint act on each realization of $\mathbf x$ separately, the expectation in \eqref{eq:master} is minimized by minimizing its integrand at every realization. For any fixed transmit design, the filter design therefore reduces to a per-realization quadratic problem with a single linear constraint, in which $\mathbf x$ is a known constant,
\begin{equation}
\begin{aligned}
    \mathbf{h}_{\rm ROI} &= \arg\min_{\mathbf{h}}\;
    \lambda\|\mathbf{h}\|^2
    +\!\!\sum_{(q,k)\in\mathcal{S}}\!\!
    \big|\mathbf{h}^H(\mathbf{x}\odot\bm{\phi}(q,k))\big|^2 \\[3pt]
    &\quad \text{s.t.}\quad \mathbf{h}^H\mathbf{x}=D,
\end{aligned}
\label{eq:roimmf}
\end{equation}
which generalizes the delay-domain ROI-MMF of \cite{yang2026constellation} to the two-dimensional range-Doppler plane. Its closed-form solution follows from the Woodbury identity. With $\mathbf Z=[\,\mathbf x\odot\bm\phi(q,k)\,]_{(q,k)\in\mathcal S} \in\mathbb C^{D\times|\mathcal S|}$,
\begin{equation}
    \mathbf h_{\rm ROI}=\frac{\beta}{\lambda}\Big(\mathbf x
    -\mathbf Z\big(\lambda\mathbf I+\mathbf Z^H\mathbf Z\big)^{-1}
    \mathbf Z^H\mathbf x\Big),
    \label{eq:woodbury}
\end{equation}
with $\beta$ enforcing $\mathbf h^H\mathbf x=D$, so the filter computation scales with the declared ROI size $|\mathcal S|$ rather than with the frame dimension $D$ \cite{yang2026constellation}. The filter, however, cannot remove the randomness of the payload itself. The notches cancel the average response of the frame, and the random fluctuation of the symbol powers around this average leaks through as residual interference. What the filter leaves behind is therefore set by the payload fluctuation, which only the transmit design can lower, by constellation shaping under the rate constraint.
\vspace{-0.5\baselineskip}
% =====================================================================
\section{Full-Frame OFDM Sensing-Communication Performance}
\label{sec:analysis}
Building on the system model and the closed-form receive filter \eqref{eq:woodbury}, this section derives closed-form expressions for the post-filter interference floor, the noise enhancement, and the $(\mathcal{P}_1)$ objective, together with the closed-form achievable rate of the communication link.
\vspace{-0.7\baselineskip}
\subsection{Floor Analysis of Full-Frame OFDM Sensing}\label{sec:floor}
This subsection characterizes the floor that a target competes against at every cell of the range-Doppler map. The floor collects three components, the deterministic response of the mean frame power, the random fluctuation of the payload, and the noise passed by the filter. Throughout, the frame is vectorized column-wise, so the element $n=n_f+n_t N$ collects the subcarrier index $n_f$ and the OFDM-symbol index $n_t$. The receiver forms the map in two steps. An element-wise time-frequency filtering can be expressed as
\begin{equation}
    r_n=h_n^{*}\,y_n,
    \label{eq:tf_filter}
\end{equation}
and the two-dimensional DFT of the filtered frame then reports each cell of the map,
\begin{equation}
    \chi(q,k)=\frac{1}{D}\,
    \bm\phi^{H}(q,k)\big(\mathbf h^{*}\odot\mathbf y\big)
    =\frac{1}{D}\sum_{n=0}^{D-1}r_n\,e^{j\theta_{(q,k)}(n)},
    \label{eq:cell_stat}
\end{equation}
where $\theta_{(q,k)}(n)=2\pi(q\,n_f/N+k\,n_t/M)$ is the phase of the steering vector in \eqref{eq:steering}, linear in the cell pair $(q,k)$. Under the unit-gain constraint \eqref{eq:gaincon}, the statistic at a scatterer's own cell returns its amplitude plus filtered noise. The map is linear in the echoes of \eqref{eq:sensing_rx}, so the leakage of multiple scatterers superimposes and it suffices to analyze one. Let us consider a scatterer of amplitude $\alpha_c$ at the cell $(q_c,k_c)$. Substituting its echo $\alpha_c\,\mathbf x\odot\bm\phi(q_c,k_c)$ of \eqref{eq:sensing_rx} into \eqref{eq:tf_filter} gives
\begin{equation}
    r_n=\alpha_c\,a_n\,e^{-j\theta_{(q_c,k_c)}(n)}+h_n^{*}w_n,
    \label{eq:tf_filtered}
\end{equation}
where $a_n=h_n^{*}x_n$ weights the contribution of the $n$-th element to the map. Now evaluate \eqref{eq:cell_stat} at a cell $(q_e,k_e)$ displaced from the scatterer by the lag $\Delta=(q_c-q_e,\,k_c-k_e)\neq(0,0)$. Substituting \eqref{eq:tf_filtered} into \eqref{eq:cell_stat}, the linearity of the phase gives $\theta_{(q_c,k_c)}(n)-\theta_{(q_e,k_e)}(n)=\theta_\Delta(n)$, so the statistic depends on the lag alone and is written $\chi(\Delta)$,
\begin{equation}
    \chi(\Delta)
    = \frac{\alpha_c}{D}\sum_{n=0}^{D-1}a_n\,
    e^{-j\theta_\Delta(n)}+\chi_w,
    \label{eq:probe}
\end{equation}
where $\chi_w=\tfrac1D\sum_{n=0}^{D-1} h_n^{*}w_n\,e^{j\theta_{(q_e,k_e)}(n)}$ is the filtered noise. A scatterer leaks into every other cell of the map through the weights $a_n$, and the leakage depends only on the lag between the two cells. The three receivers differ only through $a_n$. Up to their unit-gain constants,
\begin{subequations}\label{eq:gain_fact}
\begin{align}
    a_n^{\rm MF}&=|x_n|^2,\\
    a_n^{\rm RF}&=1,\\
    a_n^{\rm ROI}&=\tilde\beta\,|x_n|^2(1-\xi_n),\label{eq:gain_roi}
\end{align}
\end{subequations}
where $\tilde\beta$ restores the unit mainlobe gain. The MF weight is the symbol power itself, so it fluctuates with the data. The RF weight is constant, so the RF has no data-dependent leakage and pays only a noise enhancement. The ROI-MMF form \eqref{eq:gain_roi} is exact. Writing the closed-form filter \eqref{eq:woodbury} as $\mathbf h_{\rm ROI}=\tfrac{\beta}{\lambda}(\mathbf x-\mathbf Z\mathbf b)$ with the optimal amplitude vector
\begin{equation}
    \mathbf b=(\lambda\mathbf I+\mathbf Z^H\mathbf Z)^{-1}
    \mathbf Z^H\mathbf x,
    \label{eq:notch_b}
\end{equation}
and noting that every column of $\mathbf Z$ is $\mathbf x\odot\bm\phi_i$ with $\bm\phi_i=\bm\phi(\Delta_i)$ and $\theta_i(n)=\theta_{\Delta_i}(n)$ the steering vector and the phase of the $i$-th notched lag $\Delta_i\in\mathcal S$, the correction term factors as
\begin{equation}
    \mathbf Z\mathbf b
    =\sum_{i\in\mathcal S}b_i\,(\mathbf x\odot\bm\phi_i)
    =\mathbf x\odot\sum_{i\in\mathcal S}b_i\,\bm\phi_i,
    \label{eq:had_fact}
\end{equation}
so the filter reads elementwise as $h_n=\tfrac{\beta}{\lambda}\,x_n(1-\xi_n^{*})$ and the gain $a_n=h_n^{*}x_n$ takes the form \eqref{eq:gain_roi}. The data-dependent correction collects one complex exponential per notched lag,
\begin{equation}
    \xi_n=\sum_{i\in\mathcal S}b_i^*e^{j\theta_i(n)}.
    \label{eq:notch_coef}
\end{equation} 
These amplitudes are exactly the values that minimize the filter response at the covered lags. The correction $\xi_n$ is small at the operating point, as quantified in Appendix~C, so the gains of the MF and the ROI-MMF in \eqref{eq:gain_fact} share the frame power $|x_n|^2$ as their leading factor. The analysis below therefore builds on the decomposition of $|x_n|^2$ into a deterministic mean and a random fluctuation,
\begin{equation}
    |x_n|^2 = \underbrace{\bar p_n}_{p_p \text{ or } p_d} + \underbrace{\psi_n}_{0 \text{ (pilot)  or } p_d(|s_n|^2-1) \text{ (data)}}.
    \label{eq:split}
\end{equation}

Here, $s_n$ is the payload symbol of $\mathbf S_d$ carried by the element $n$. The pilots, being constant-modulus, carry no power fluctuation, whereas the payload fluctuation is governed by the variance of the symbol power,
\begin{equation}
    \kappa_2=\mathbb E\big[(|x|^2-1)^2\big]=\mu_4-1,
    \label{eq:kappa2}
\end{equation}
where the second equality uses the unit power $\mathbb E[|x|^2]=1$. The fluctuation of \eqref{eq:split} therefore satisfies
\begin{equation}
    \mathbb E[\psi_n]=0,
    \qquad
    \mathrm{Var}(\psi_n)=p_d^2\,\kappa_2,
    \qquad n\in\mathcal D.
    \label{eq:fstats}
\end{equation}
The statistic $\kappa_2$ vanishes for a constant-modulus constellation and equals $0.32$ for 16-QAM. In the delay domain, it is the single constellation statistic to which the floor of every receiver is proportional \cite{yang2026constellation}. We first characterize the deterministic mean component, and then the random fluctuation together with the filtered noise.

\begin{lemma}\label{lem:mean}
At any integer lag $\Delta=(q,k)\neq(0,0)$, the mean profile of the conventional vertical-comb frame satisfies $\sum_{n=0}^{D-1} \bar p_n e^{-j\theta_\Delta(n)}=0$ unless $k=0$ and $q$ coincides with a comb harmonic, $q_{\rm gl}\in\{\pm N/f_p,\pm 2N/f_p,\ldots\}$, where it equals $(p_p-p_d)D_p$. Consequently, the deterministic grating-lobe response at the comb harmonics is given by
\begin{equation}
    \big|\chi_{\rm gl}\big|^2
    =|\alpha_c|^2\big[(p_p-p_d)\,\eta\big]^2,
    \label{eq:grating}
\end{equation}
and $\mathbb E[\chi(\Delta)]=0$ at every other lag.
\end{lemma}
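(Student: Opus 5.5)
The plan is to write the mean profile as a constant plus a pilot indicator and evaluate both 2-D DFT sums directly. For the conventional vertical comb, the pilot set is $\mathcal P=\{(n_f,n_t): n_f\equiv c \pmod{f_p}\}$ for some offset $c$, for every OFDM symbol $n_t$. Then $\bar p_n=p_d+(p_p-p_d)B_n$, and by linearity
\begin{equation*}
\sum_{n=0}^{D-1}\bar p_n e^{-j\theta_\Delta(n)}
= p_d\sum_{n_f,n_t} e^{-j2\pi(qn_f/N+kn_t/M)}
+(p_p-p_d)\sum_{(n_f,n_t)\in\mathcal P} e^{-j2\pi(qn_f/N+kn_t/M)}.
\end{equation*}

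The first sum is the 2-D DFT of the all-ones frame. At any integer lag $(q,k)\neq(0,0)$ modulo $(N,M)$ it vanishes by the geometric-series identity. The second sum factorizes because the comb is a Cartesian product: it equals $\big(\sum_{n_t=0}^{M-1}e^{-j2\pi kn_t/M}\big)\big(\sum_{u=0}^{N/f_p-1}e^{-j2\pi q(c+uf_p)/N}\big)$, assuming $f_p$ divides $N$. The time factor is $M\delta_{k}$. The frequency factor is $e^{-j2\pi qc/N}$ times a geometric series with ratio $e^{-j2\pi q f_p/N}$. That series equals $N/f_p$ when $qf_p/N\in\mathbb Z$, i.e.\ when $q$ is a multiple of $N/f_p$, and is zero otherwise.

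Hence the sum is nonzero only for $k=0$ and $q\in\{\pm N/f_p,\pm 2N/f_p,\dots\}$, excluding $q\equiv 0$. There its magnitude is $(p_p-p_d)MN/f_p=(p_p-p_d)D_p$. The value carries the unit-modulus phase $e^{-j2\pi qc/N}$, which equals one for $c=0$ and does not affect \eqref{eq:grating}; with this convention the claimed value $(p_p-p_d)D_p$ holds.

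For the consequence, take the expectation of \eqref{eq:probe}, using $a_n$ of the MF form, since the ROI correction $\xi_n$ is dropped at leading order as stated. We need $\mathbb E[\psi_n]=0$ from \eqref{eq:fstats} and $\mathbb E[w_n]=0$. This gives $\mathbb E[\chi(\Delta)]=\frac{\alpha_c}{D}\sum_n\bar p_n e^{-j\theta_\Delta(n)}$. At a harmonic this is $\alpha_c(p_p-p_d)D_p/D=\alpha_c(p_p-p_d)\eta$, whose squared modulus is \eqref{eq:grating}. At every other integer lag it is zero.

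The main obstacle is not the algebra but the bookkeeping of which weights enter. The unit-gain constant (MF scaling $D/\|\mathbf x\|^2\to1$) must be handled, and it must be justified that the expectation over the data of the ROI weight $\tilde\beta|x_n|^2(1-\xi_n)$ reduces to $\bar p_n$ at leading order. I would state the lemma for the mean-power profile and defer the $\xi_n$ corrections to the Appendix~C bound. I would also treat the MF normalization as $1$ via its concentration.
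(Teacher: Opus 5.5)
Your proof is correct and follows the paper's own argument: split $\bar p_n$ into the constant $p_d$ plus $(p_p-p_d)$ on the pilot set, use orthogonality to kill the full-grid 2-D sum, evaluate the comb sum at its harmonics, then normalize by $D$ and square. Your explicit Cartesian-product factorization, the offset phase $e^{-j2\pi qc/N}$ (which the paper silently takes to be one), the exclusion of $q\equiv 0 \pmod N$, and your note that $\mathbb E[\chi(\Delta)]$ equals the mean-profile sum only at leading order are more careful than the paper. They do not change the argument.
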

\begin{IEEEproof}
Since $\bar p_n=p_p$ on $\mathcal P$ and $\bar p_n=p_d$ on $\mathcal D$, the mean profile at lag $\Delta=(q,k)$ splits as
\begin{align}
    \sum_{n=0}^{D-1} \bar p_ne^{-j\theta_\Delta(n)}
    &=p_d\sum_{n_f=0}^{N-1}e^{-j2\pi qn_f/N}
    \sum_{n_t=0}^{M-1}e^{-j2\pi kn_t/M}\nonumber\\
    &+(p_p-p_d)\!\!\sum_{(n,m)\in\mathcal P}\!\!
    e^{-j2\pi(qn/N+km/M)}.
    \label{eq:app_mean}
\end{align}
The first term vanishes for every $\Delta\neq(0,0)$ by the orthogonality of the discrete exponentials. For the vertical comb, the second term vanishes unless $k=0$ and $qf_p=0\pmod N$, that is, at the comb harmonics, where it equals $(p_p-p_d)D_p$. Normalizing by $D$ and squaring gives \eqref{eq:grating} with $D_p/D=\eta$.
\end{IEEEproof}

Lemma~\ref{lem:mean} shows that the uniform-power frame, $\gamma=\eta$, carries a perfectly flat mean profile and hence no deterministic artifact. In contrast, any pilot-power boosting $\gamma>\eta$ on the conventional vertical comb produces grating lobes exactly on the zero-Doppler response, where the range cut and the cell-averaging constant false alarm rate (CFAR) window operate \cite{zhang2024cross,bouziane2026optimized}. Power asymmetry is therefore admissible only if the grating lobes fall outside the processed window or are covered by $\mathcal S$. Lemma~\ref{lem:mean} thus fixes the deterministic term of the design objective and identifies the obstacle that the proposed pilot placement must remove before the boost $\gamma>\eta$ becomes usable.

We next turn to the random component of the floor, which combines the payload fluctuation with the noise passed by the filter, in the following lemma.

\begin{lemma}\label{lem:keff}
At any integer lag $\Delta$ off the comb harmonics, the post-filter cell power of each receiver satisfies
\begin{subequations}\label{eq:keff}
\begin{align}
    \mathbb E\big[|\chi_{\rm MF}(\Delta)|^2\big]
    &=\frac{|\alpha_c|^2\,\keff+\sigma^2\delta_N^{\rm MF}}{D},
    \label{eq:keff_mf}\\
    \mathbb E\big[|\chi_{\rm RF}(\Delta)|^2\big]
    &=\frac{\sigma^2\delta_N^{\rm RF}}{D},
    \label{eq:keff_rf}\\
    \mathbb E\big[|\chi_{\rm ROI}(\Delta)|^2\big]
    &=\frac{[\lambda/(\lambda+D)]^{2}|\alpha_c|^2\,\keff
    +\sigma^2\delta_N^{\rm ROI}}{D},
    \quad \Delta\in\mathcal S,
    \label{eq:keff_roi}
\end{align}
\end{subequations}
where the effective centered second moment is given by
\begin{equation}
    \keff=(1-\eta)\,p_d^2\,\kappa_2,
    \label{eq:keffdef}
\end{equation}
and $\delta_N^{\rm MF}$, $\delta_N^{\rm RF}$, and $\delta_N^{\rm ROI}$ denote the unit-gain noise-enhancement factor $\delta_N=\|\mathbf h\|^2/D$ of the respective receivers, evaluated in closed form in Lemma~\ref{lem:noise}. Expression \eqref{eq:keff_roi} holds on the notched set, and at an uncovered lag $\Delta\notin\mathcal S$ the ROI-MMF floor coincides with the MF floor \eqref{eq:keff_mf} to first order in $|\mathcal S|\keff/D$.
\end{lemma}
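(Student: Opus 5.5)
We start from the single-scatterer expression \eqref{eq:probe} and write $\chi(\Delta)=\frac{\alpha_c}{D}\sum_n a_n e^{-j\theta_\Delta(n)}+\chi_w$. Since $\mathbf W$ is zero-mean and independent of the payload, the cross term vanishes in expectation. This gives $\mathbb E|\chi(\Delta)|^2=\frac{|\alpha_c|^2}{D^2}\mathbb E|\sum_n a_ne^{-j\theta_\Delta(n)}|^2+\mathbb E|\chi_w|^2$. The noise part is immediate: $\mathbb E|\chi_w|^2=\sigma^2\|\mathbf h\|^2/D^2=\sigma^2\delta_N/D$ by the definition $\delta_N=\|\mathbf h\|^2/D$, and this handles the noise terms of all three receivers at once. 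The rest of the proof treats the signal sum receiver by receiver, using the weights \eqref{eq:gain_fact} and the decomposition \eqref{eq:split}.

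\emph{MF and RF.} For the MF we substitute $a_n=|x_n|^2=\bar p_n+\psi_n$. The unit-gain constant $D/\|\mathbf x\|^2$ concentrates at one, so it can be replaced by one to leading order. By Lemma~\ref{lem:mean}, the mean profile $\sum_n\bar p_ne^{-j\theta_\Delta(n)}$ vanishes off the comb harmonics, which leaves $\sum_{n\in\mathcal D}\psi_ne^{-j\theta_\Delta(n)}$. The $\psi_n$ are independent and zero-mean with variance $p_d^2\kappa_2$ by \eqref{eq:fstats}, so the second moment of this sum is $D_d p_d^2\kappa_2=D(1-\eta)p_d^2\kappa_2=D\keff$. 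Dividing by $D^2$ gives \eqref{eq:keff_mf}. For the RF, $a_n\equiv1$ and the exponential sum is zero at every integer lag $\Delta\neq(0,0)$, which leaves only the noise term \eqref{eq:keff_rf}.

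\emph{ROI-MMF on $\mathcal S$.} Here we use the optimality structure of \eqref{eq:roimmf} directly. Its response at a notched lag $\Delta_i$ is $[\mathbf Z^H\mathbf h]_i$. From \eqref{eq:woodbury}, $\mathbf Z^H\mathbf h\propto\mathbf Z^H\mathbf x-\mathbf Z^H\mathbf Z(\lambda\mathbf I+\mathbf Z^H\mathbf Z)^{-1}\mathbf Z^H\mathbf x=\lambda(\lambda\mathbf I+\mathbf Z^H\mathbf Z)^{-1}\mathbf Z^H\mathbf x$. Thus the notched responses are the MF responses $\mathbf Z^H\mathbf x$ passed through the resolvent $\lambda(\lambda\mathbf I+\mathbf G)^{-1}$, with Gram matrix $\mathbf G=\mathbf Z^H\mathbf Z$, $[\mathbf G]_{ij}=\sum_n|x_n|^4e^{j(\theta_j-\theta_i)(n)}$. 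Its diagonal is $\|\mathbf x\|_4^4\approx D\,\mathbb E|x_n|^4$, and its off-diagonal entries are mean-profile sums of $|x_n|^4$ at nonzero lags. These vanish in mean, except at grating harmonics in the boosted vertical comb, which we exclude, and they fluctuate at order $\sqrt D$. We therefore approximate $\mathbf G\approx D\mathbf I$ at the normalization used, with relative error $O(\sqrt{|\mathcal S|/D})$. This turns the resolvent into the scalar $\lambda/(\lambda+D)$. The gain normalization $\tilde\beta$ (the mainlobe correction) is $1+O(|\mathcal S|\keff/D)$, so to leading order each notched response equals $\frac{\lambda}{\lambda+D}$ times the MF response. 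Squaring and using the MF computation gives \eqref{eq:keff_roi}.

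\emph{Uncovered lags and main obstacle.} For $\Delta\notin\mathcal S$, we write $a_n^{\rm ROI}=\tilde\beta|x_n|^2(1-\xi_n)$. The extra term $\sum_n|x_n|^2\xi_ne^{-j\theta_\Delta(n)}=\sum_i b_i^*\sum_n|x_n|^2e^{j(\theta_i-\theta_\Delta)(n)}$ is a combination of MF responses at lags $\Delta-\Delta_i\neq0$, weighted by $b_i=O(\sqrt{\keff/D})$ (Appendix~C). Its contribution to the cell power relative to the MF floor is therefore $O(|\mathcal S|\keff/D)$, which gives the first-order coincidence with \eqref{eq:keff_mf}. The main obstacle is making the ROI step rigorous: replacing $\mathbf G$ by $D\mathbf I$ and $\tilde\beta$ by one, and controlling the dependence between the random resolvent and $\mathbf Z^H\mathbf x$, since both are built from the same payload. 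I would handle this by a first-order perturbation expansion of $(\lambda\mathbf I+\mathbf G)^{-1}$ around $(\lambda+D)^{-1}\mathbf I$, and bound the cross terms through fourth-order independence of the payload symbols.
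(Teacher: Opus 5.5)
Your argument follows the paper's Appendix~A essentially step by step. It uses the mean/fluctuation split with Lemma~\ref{lem:mean}, the i.i.d.\ variance sum $D_dp_d^2\kappa_2=D\keff$, the resolvent identity $\mathbf Z^H\mathbf h_{\rm ROI}=\beta(\lambda\mathbf I+\mathbf Z^H\mathbf Z)^{-1}\mathbf Z^H\mathbf x$ with $\mathbf Z^H\mathbf Z\approx D\mathbf I$ and $\beta\approx\lambda$, and the bound $\mathbb E|b_i|^2\approx\keff/D$ with uncorrelated summands at uncovered lags. Your proposed perturbation expansion of the resolvent goes beyond the paper, which simply plugs in these approximations.

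One step is wrong as written. The columns of $\mathbf Z$ are $\mathbf x\odot\bm\phi_i$, so the Gram entries are $[\mathbf Z^H\mathbf Z]_{ij}=\sum_n|x_n|^2e^{j(\theta_i-\theta_j)(n)}$, not sums of $|x_n|^4$. With the correct form, the diagonal is exactly $\|\mathbf x\|^2\approx D$. Each off-diagonal entry is itself an MF sidelobe at the difference lag $\Delta_i-\Delta_j$, with zero mean off the harmonics and $\mathbb E|[\mathbf Z^H\mathbf Z]_{ij}|^2=D\keff$. This is what justifies $\mathbf Z^H\mathbf Z\approx D\mathbf I$, and it becomes exact for a constant-modulus frame.

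With your $|x_n|^4$ version, the diagonal would be $D\,\mathbb E|x|^4$, which exceeds $D$ for any non-constant-modulus frame. The covered-lag factor would then come out as $\lambda/(\lambda+D\,\mathbb E|x|^4)$ instead of $\lambda/(\lambda+D)$. So your phrase ``at the normalization used'' hides a wrong constant rather than justifying the approximation.
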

\begin{IEEEproof}
Please refer to Appendix~A.
\end{IEEEproof}

The floor expression \eqref{eq:keff} involves the unit-gain noise-enhancement factor $\delta_N$, which quantifies the SNR cost that each filter pays for its interference behavior, and the following lemma evaluates it for the three receivers.

\begin{lemma}\label{lem:noise}
The noise-enhancement factors defined in Lemma~\ref{lem:keff} evaluate in closed form to
\begin{subequations}\label{eq:deltaN}
\begin{align}
    \delta_N^{\rm MF} &= 1,\label{eq:deltaN_mf}\\
    \delta_N^{\rm RF} &= \frac{1}{D}\Big(\frac{D_p}{p_p}
    +\frac{D_d}{p_d}\nu_{-2}\Big),\label{eq:deltaN_rf}\\
    \delta_N^{\rm ROI} &\approx \frac{D}{D-\keff|\mathcal S|},\label{eq:deltaN_roi}
\end{align}
\end{subequations}
where \eqref{eq:deltaN_mf} holds up to the vanishing relative fluctuation of the frame energy and \eqref{eq:deltaN_roi} holds to first order in $|\mathcal S|/D$.
\end{lemma}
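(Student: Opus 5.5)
The factor $\delta_N=\|\mathbf h\|^2/D$ is computed separately for each receiver, always under the unit-gain normalization \eqref{eq:gaincon}. In each case the filter is written elementwise and then averaged over the pilot set $\mathcal P$ and the data set $\mathcal D$, using the frame model \eqref{eq:transmit_frame} and the moment definitions \eqref{eq:moments}.

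\emph{Matched filter.} The normalized MF is $\mathbf h=(D/\|\mathbf x\|^2)\mathbf x$, so $\|\mathbf h\|^2=D^2/\|\mathbf x\|^2$ and $\delta_N^{\rm MF}=D/\|\mathbf x\|^2$. By \eqref{eq:frame_power_constraint}, $\mathbb E\|\mathbf x\|^2=D$. The relative fluctuation of $\|\mathbf x\|^2$ is $\mathcal O(\sqrt{\keff/D})$, since its variance is $D_d p_d^2\kappa_2$, and this vanishes as the frame grows. This gives \eqref{eq:deltaN_mf}.

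\emph{Reciprocal filter.} The RF gives $\|\mathbf h\|^2=\sum_n|x_n|^{-2}$. On pilot cells the symbols are constant-modulus, so $|x_n|^{-2}=1/p_p$. On data cells $|x_n|^{-2}=|s_n|^{-2}/p_d$, whose expectation is $\nu_{-2}/p_d$. Summing over the $D_p$ pilot cells and the $D_d$ data cells and dividing by $D$ yields \eqref{eq:deltaN_rf}. This is exact in expectation and concentrates by the law of large numbers.

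\emph{ROI-MMF.} I would use the elementwise form $h_n=\tfrac{\beta}{\lambda}x_n(1-\xi_n^*)$ from \eqref{eq:had_fact}. An equivalent and cleaner route is the standard solution of the constrained quadratic \eqref{eq:roimmf}: with $\mathbf Q=\lambda\mathbf I+\mathbf Z\mathbf Z^H$ one has $\mathbf h=D\,\mathbf Q^{-1}\mathbf x/(\mathbf x^H\mathbf Q^{-1}\mathbf x)$. Applying Woodbury, $\mathbf x^H\mathbf Q^{-1}\mathbf x=\lambda^{-1}\big(\|\mathbf x\|^2-\mathbf c^H(\lambda\mathbf I+\mathbf G)^{-1}\mathbf c\big)$, where $\mathbf G=\mathbf Z^H\mathbf Z$ and $\mathbf c=\mathbf Z^H\mathbf x$.

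The key step is evaluating $\mathbf c$ and $\mathbf G$. The entries of $\mathbf c$ are $c_i=\sum_n|x_n|^2e^{-j\theta_i(n)}$. By the split \eqref{eq:split} and Lemma~\ref{lem:mean} off the harmonics, the mean part cancels, so $c_i=\sum_n\psi_ne^{-j\theta_i(n)}$. Each $c_i$ then has zero mean and, by \eqref{eq:fstats}, $\mathbb E|c_i|^2=D_dp_d^2\kappa_2=D\keff$. The Gram matrix $\mathbf G$ has entries $\sum_n|x_n|^2e^{j(\theta_i-\theta_{i'})(n)}$. Its diagonal equals $\|\mathbf x\|^2\approx D$, and its off-diagonal entries are $\mathcal O(\sqrt{D\keff})$. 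Hence $\mathbf G\approx D\mathbf I$ to first order in $|\mathcal S|/D$, and the cross-lag terms average out because the fluctuations $\psi_n$ are independent with zero mean.

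With these approximations,
\begin{equation*}
\mathbf c^H(\lambda\mathbf I+\mathbf G)^{-1}\mathbf c\approx\frac{|\mathcal S|D\keff}{\lambda+D}.
\end{equation*}
The same manipulation gives
\begin{equation*}
\mathbf x^H\mathbf Q^{-2}\mathbf x\approx\lambda^{-2}\Big(D-\frac{(2\lambda+D)\,|\mathcal S|D\keff}{(\lambda+D)^2}\Big).
\end{equation*}
Then $\delta_N^{\rm ROI}=D\,\mathbf x^H\mathbf Q^{-2}\mathbf x/(\mathbf x^H\mathbf Q^{-1}\mathbf x)^2$. In the regime $\lambda\ll D$ of the operating point, expanding to first order gives $1+|\mathcal S|\keff/D\approx D/(D-\keff|\mathcal S|)$, which is \eqref{eq:deltaN_roi}.

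The main obstacle is justifying the replacement of $\mathbf G$ by $D\mathbf I$ and of $|c_i|^2$ by its mean inside a matrix inverse. I would handle this with a Neumann expansion of $(\lambda\mathbf I+\mathbf G)^{-1}$ around $(\lambda+D)^{-1}\mathbf I$. The off-diagonal corrections then enter only at second order, $\mathcal O(|\mathcal S|^2\keff/D^2)$, which is consistent with the smallness of $\xi_n$ established in Appendix~C.
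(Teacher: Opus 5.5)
Your proposal is correct and is essentially the paper's proof. The MF and RF parts match Appendix~B. For the ROI-MMF you use the same quotient $D\,\mathbf x^H\mathbf Q^{-2}\mathbf x/(\mathbf x^H\mathbf Q^{-1}\mathbf x)^2$ with the same inputs, $\mathbf G\approx D\mathbf I$ and $\mathbb E|c_i|^2=D\keff$. The only difference is bookkeeping. The paper splits $\mathbf x$ into its span component (energy $\keff|\mathcal S|$, gain $1/(\lambda+D)$) and the orthogonal remainder (gain $1/\lambda$), then drops the span terms. Your Woodbury algebra keeps those terms, which gives $1+\frac{|\mathcal S|\keff}{D}\big(\frac{D}{\lambda+D}\big)^2$ to first order and makes the $\lambda\ll D$ step explicit.

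One case is missing, though. Your step ``off the harmonics the mean part cancels'' presumes that no notched lag carries a deterministic mean-profile response. That holds exactly at uniform power. When $p_p\neq p_d$, however, the design appends the nonzero-Doppler harmonic lags of the pilot lattice to $\mathcal S$. At those lags $c_i$ acquires a deterministic part, so the span energy exceeds $\keff|\mathcal S|$. The paper handles this with the Parseval bound \eqref{eq:app_parseval} and argues that the extra energy is negligible at the operating boosts. You need the same addendum for your argument to cover the boosted operating point where the design actually runs.
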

\begin{IEEEproof}
Please refer to Appendix~B.
\end{IEEEproof}

The first-order approximation in \eqref{eq:deltaN} holds in the regime $|\mathcal S|\ll D$, consistent with the declared-ROI premise of this work. Under this condition, the additional noise power of the ROI-MMF is $|\mathcal S|/D$-weighted and hence negligible, whereas the RF incurs the full constellation-dependent penalty $\nu_{-2}$, which grows as constellation points approach the origin. Lemmas~\ref{lem:keff} and~\ref{lem:noise} together determine the interference and the noise level of the post-filter floor, and the constellation and the power split act on this floor only through $\keff$, to which the objective decomposition and the operating-point selection below both reduce.
\vspace{-0.7\baselineskip}
\subsection{Closed-Form Objective Decomposition}
\label{sec:decomp}

We now evaluate the $(\mathcal{P}_1)$ objective in closed form. Substituting $\mathbf h_{\rm ROI}$ of \eqref{eq:woodbury} back into \eqref{eq:master} leaves an objective in the transmit variables alone. The following theorem evaluates this objective and exposes where each design variable acts.

\begin{theorem}\label{thm:decomp}
Write $\bar u_n=\mathbb E[h_n^{*}x_n]$ for the mean per-cell reference profile. For the unit-gain ROI-MMF filter, the $(\mathcal{P}_1)$ objective admits the exact decomposition
\begin{equation}
\begin{aligned}
\mathcal J
=\;&\underbrace{\lambda\,\mathbb E\|\mathbf h\|^2}_{\substack{\text{noise
control}\\ \text{(filter)}}}
+\underbrace{\sum_{(q,k)\in\mathcal S}
\Big|\sum_{n=0}^{D-1}\bar u_n\,[\bm\phi(q,k)]_n\Big|^2
}_{\substack{\text{deterministic pilot ambiguity}\\
\text{placement }\mathcal P\text{ and boost }\gamma}}\\[2pt]
&+\underbrace{\keff\;
\Omega_{\mathbf h}(\mathcal S)}_{\substack{\text{payload
fluctuation}\\ \kappa_2\ \text{(constellation) and}\ p_d(\gamma)}},
\end{aligned}
\label{eq:explicit_obj}
\end{equation}
where $\keff\,\Omega_{\mathbf h}(\mathcal S)=\sum_{(q,k)\in\mathcal S}\mathrm{Var}\big(\mathbf h^H(\mathbf x\odot\bm\phi(q,k))\big)\ge0$ is the centered payload term, which vanishes at the constant-modulus endpoint. To first order in $|\mathcal S|/D$, the mean reference $\bar u_n$ reduces to the mean-power profile $\bar p_n$, and the payload weight of the ROI-MMF is
\begin{equation}
    \Omega_{\mathbf h}
    =\Big[\frac{\lambda}{\lambda+D}\Big]^{2}D\,|\mathcal S|,
    \label{eq:omega_vals}
\end{equation}
depending only on the regularization and the notch set.
\end{theorem}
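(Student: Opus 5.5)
The plan is to start from the $(\mathcal{P}_1)$ objective \eqref{eq:master} evaluated at $\mathbf h=\mathbf h_{\rm ROI}$. For each lag I apply the elementary identity $\mathbb E|Y|^2=|\mathbb E Y|^2+\mathrm{Var}(Y)$ to
\[
Y_{(q,k)}=\mathbf h^H(\mathbf x\odot\bm\phi(q,k))=\sum_n h_n^{*}x_n[\bm\phi(q,k)]_n=\sum_n a_n[\bm\phi(q,k)]_n .
\]
The noise term $\lambda\mathbb E\|\mathbf h\|^2$ is left as is. Since $\mathbb E Y_{(q,k)}=\sum_n\bar u_n[\bm\phi(q,k)]_n$ by linearity of expectation, summing over $\mathcal S$ gives the deterministic ambiguity term exactly. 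The centered term is then simply \emph{defined} as $\keff\Omega_{\mathbf h}(\mathcal S)$. It is nonnegative because it is a sum of variances. It vanishes at the constant-modulus endpoint: there $\kappa_2=0$, so $|x_n|^2$ is deterministic, the Gram matrix $\mathbf Z^H\mathbf Z$ and hence $\mathbf b$ and $\xi_n$ are deterministic, and so $a_n$ and $Y$ are deterministic. This settles the exact part of the theorem.

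For the first-order statements I use the factorization \eqref{eq:gain_roi}, $a_n=\tilde\beta|x_n|^2(1-\xi_n)$, together with the smallness of $\xi_n$ from Appendix~C. This gives $\bar u_n=\mathbb E[a_n]=\tilde\beta\,\bar p_n+O(|\mathcal S|/D)$. The unit-gain condition $\sum_n a_n=D$ forces $\tilde\beta\to1$ at the same order, so $\bar u_n$ reduces to $\bar p_n$. As a consistency check, the deterministic term then coincides with the mean profile analysed in Lemma~\ref{lem:mean}, and hence depends only on $\mathcal P$ and $\gamma$.

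For the payload weight I do not recompute the variance. I reuse Lemma~\ref{lem:keff}, equation \eqref{eq:keff_roi}. Its signal part gives, at each notched lag $\Delta\in\mathcal S$,
\[
\mathbb E|\chi(\Delta)|^2=|\alpha_c|^2[\lambda/(\lambda+D)]^2\keff/D .
\]
Since $\chi=\alpha_c Y/D$ by \eqref{eq:probe} and $\mathbb E\chi=0$ off the comb harmonics, we get
\[
\mathrm{Var}(Y_\Delta)=D^2\,\mathrm{Var}(\chi)/|\alpha_c|^2=[\lambda/(\lambda+D)]^2 D\,\keff .
\]
Summing over the $|\mathcal S|$ notched lags yields \eqref{eq:omega_vals}. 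The bare-variance form must be handled separately at lags where the mean is nonzero. The deterministic profile $\bar p_n$ contributes no randomness, so the variance comes only from $\psi_n$, and the Appendix~A computation applies verbatim.

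The main obstacle is justifying that each lag's variance keeps the factor $[\lambda/(\lambda+D)]^2$ and that cross-lag couplings through $\mathbf b$ (off-diagonal entries of $\mathbf Z^H\mathbf Z$) are only higher order. My first attempt is to expand $(\lambda\mathbf I+\mathbf Z^H\mathbf Z)^{-1}$ around $(\lambda+D)^{-1}\mathbf I$. This works because the diagonal entries are $\|\mathbf x\|^2\approx D$, while the off-diagonal entries are the random sums $\sum_n|x_n|^2e^{j(\theta_i-\theta_{i'})(n)}$. Those sums have zero mean when the pilot placement is balanced and the lag differences avoid the harmonics, and their size is $O(\sqrt{D\keff})$. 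They therefore perturb $\mathrm{Var}(Y)$ only at relative order $|\mathcal S|\keff/D$, matching the order already accepted in Lemma~\ref{lem:keff}.
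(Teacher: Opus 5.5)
Your proposal is correct and follows essentially the same route as the paper's Appendix~C: the per-lag split $\mathbb E|S_\Delta|^2=|\mathbb E S_\Delta|^2+\mathrm{Var}(S_\Delta)$ gives the exact part. The reduction $\bar u_n\to\bar p_n$ goes through $a_n=\tilde\beta|x_n|^2(1-\xi_n)$ with small $\xi_n$. The covered-lag variance $[\lambda/(\lambda+D)]^2D\keff$ you import from Lemma~\ref{lem:keff}, whereas the paper re-derives it from the same Woodbury step \eqref{eq:app_covered} with $\mathbf Z^H\mathbf Z\approx D\mathbf I$ and $\beta\approx\lambda$.

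Your exact argument that the centered term vanishes at the constant-modulus endpoint is a small but genuine addition. In that case $\mathbf Z^H\mathbf Z$, $\mathbf Z^H\mathbf x$, $\mathbf b$, and $\tilde\beta$ all become deterministic, so $a_n$ does too. The paper defines $\Omega_{\mathbf h}$ only for $\keff>0$ and leaves that claim implicit.
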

\begin{IEEEproof}
Please refer to Appendix~C.
\end{IEEEproof}

\begin{remark}\label{rem:roles}
Theorem~\ref{thm:decomp} exposes the role of each design variable. The constellation enters only through $\kappa_2$. Its shaping therefore trades mutual information against $\kappa_2$, and the rate constraint \eqref{eq:ratecon} selects the operating point on the boundary of this trade-off. The pilot design enters twice, once through the deterministic ambiguity of the boosted pilot pattern and once through the payload power $p_d=(1-\gamma)/(1-\eta)$ inside $\keff$. The receive filter determines the noise term and the payload weight $\Omega_{\mathbf h}$. Since the ROI-MMF suppresses the covered lags down to the level set by $\lambda$, the payload term is small at the operating point. The sensing-relevant effect of $\keff$ therefore arises from the interference left at the uncovered lags.
\end{remark}

\subsection{Effective Communication SINR and Achievable Rate}
\label{sec:comm_analysis}

To close the design loop, we now derive the achievable rate in closed form. For lattice constant-modulus pilots with $K_h\le N/f_p$, the Gram matrix in \eqref{eq:lmmse} is exactly orthogonal,
\begin{equation}
    \bm\Omega^H\bm\Omega=p_pD_p\,\mathbf I_{K_h},
    \label{eq:gram}
\end{equation}
which renders the residual-error trace closed-form. Treating the estimation error as an additional uncorrelated noise term yields the effective SINR as
\begin{equation}
    \mathrm{SINR}_{\rm eff}(\gamma,\eta)
    = \frac{p_d/\sigma_c^2}{(p_d/\sigma_c^2)\,
    \mathrm{tr}(\mathbf C_\epsilon)+1},
    \label{eq:sinreff}
\end{equation}
where the residual-error trace admits the closed form
\begin{equation}
    \mathrm{tr}(\mathbf C_\epsilon)
    =\frac{K_h\sigma_h^2}{1+\sigma_h^2\,p_pD_p/\sigma_c^2},
    \label{eq:trace}
\end{equation}
which stays below $10^{-3}$ whenever $\gamma\,\mathrm{SNR}_c\ge 10^{3}K_h/D$, as at every operating point considered here, so the Gaussian treatment of the estimation error in \eqref{eq:sinreff} is accurate. The effective rate approximation is given by
\begin{equation}
    R(\eta,\gamma,\mathcal X)
    =(1-\eta)\,I_{\mathcal X}\big(\mathrm{SINR}_{\rm eff}
    (\gamma,\eta)\big),
    \label{eq:rate}
\end{equation}
with $I_{\mathcal X}$ the constellation-constrained mutual information \cite{caire1998bit}. Notably, the pair $(\eta,\gamma)$ appears in the sensing floor through $\keff$ and in the communication rate through $R$, so the sensing and communication performance trade off against each other over the same design variables. Minimizing $\keff$ alone, however, drives the constellation toward the constant-modulus endpoint and the power split toward maximal boosting, both of which erode the communication rate.

% =====================================================================
\vspace{-1\baselineskip}
\section{Joint Constellation and Pilot Design}\label{sec:design}

This section develops the transmit design on the basis of the closed-form results of Section~\ref{sec:analysis}. 
\vspace{-0.7\baselineskip}
\subsection{Balanced Staggered Pilot Placement}
\label{sec:pilot_design}

Lemma~\ref{lem:keff} removes the pilot contribution from the fluctuation floor, and \eqref{eq:keffdef} exposes the payload power $p_d^2$ as a multiplicative floor factor. Shifting energy toward the pilots, $\gamma>\eta$, therefore lowers the floor and, through \eqref{eq:sinreff}, simultaneously improves the channel estimate on which the payload rate depends. By Lemma~\ref{lem:mean}, however, a conventional vertical comb with $\gamma\neq\eta$ places deterministic grating lobes on the zero-Doppler response, so the boost is usable only with a placement that removes them.

To this end, the pilots are placed on a symbol subset $\mathcal T$, and the placement is parameterized by the offset vector $\mathbf o$ as
\begin{equation}
    \mathcal P(\mathbf o)=\{(n,m):\ m\in\mathcal T,\ n = o_m\ (\mathrm{mod}\ f_p')\},
    \label{eq:placement}
\end{equation}
so that each pilot-bearing symbol carries a full-band comb of factor $f_p'<f_p$ and offset $o_m$. Emptying part of the symbol axis is compensated by the denser per-symbol comb, so that the pilot count $D_p$, and hence the density $\eta$, is unchanged. The offset vector $\mathbf o$ is drawn such that the resulting placement belongs to the balanced set $\mathcal B$ of \eqref{eq:balanced_class}, that is, every subcarrier carries an equal number $c_p=D_p/N$ of pilots across the frame. The powers \eqref{eq:pilot_data_power} and the boost bound \eqref{eq:boost_cap} are unchanged, and only the placement is redistributed on the two-dimensional grid. The following proposition establishes the two structural invariances of this construction.

\begin{proposition}\label{prop:balanced}
For any $\gamma\in[\eta,\gamma_{\max}]$ and any placement in $\mathcal B$, (i) the frequency-marginal mean-power profile of the frame is exactly flat, $\sum_m\mathbb E|X_{n,m}|^2=M$ for every $n$, and hence the grating lobes of \eqref{eq:grating} vanish identically on the zero-Doppler response at every lag, and (ii) the pilot Gram matrix of the delay steering stack remains exactly orthogonal, $\bm\Omega^H\bm\Omega=p_pD_p\mathbf I_{K_h}$, so that the closed-form effective SINR \eqref{eq:sinreff} remains exact.
\end{proposition}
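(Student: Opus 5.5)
Both parts follow from the balance condition \eqref{eq:balanced_class} and the constant modulus of the pilots. The proof amounts to redoing the computations of Lemma~\ref{lem:mean} and of \eqref{eq:gram} with the vertical comb replaced by a generic placement in $\mathcal B$.

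For part (i), I compute the per-element mean power. It is $\mathbb E|X_{n,m}|^2=p_pB_{n,m}+p_d(1-B_{n,m})$, using $|[S_p]_{n,m}|=1$ and $\mathbb E|s|^2=1$. Summing over $m$ and inserting \eqref{eq:balanced_class} gives
\[
\sum_m\mathbb E|X_{n,m}|^2=p_p\frac{D_p}{N}+p_d\Big(M-\frac{D_p}{N}\Big)=M\big(\eta p_p+(1-\eta)p_d\big)=M
\]
by \eqref{eq:frame_power_constraint}, since $D_p/(NM)=\eta$. This expression is independent of $n$ and holds for any $\gamma$.

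For the grating-lobe claim I reuse the split \eqref{eq:app_mean} at a zero-Doppler lag $\Delta=(q,0)$ with $q\neq0$. The first term vanishes by orthogonality. The second term equals $(p_p-p_d)\sum_{n}e^{-j2\pi qn/N}\sum_m B_{n,m}=(p_p-p_d)c_p\sum_n e^{-j2\pi qn/N}=0$. Hence $\mathbb E[\chi(\Delta)]=0$ on the whole zero-Doppler cut. This holds at every integer $q\neq0$, and in particular at the former harmonics $q_{\rm gl}$, so \eqref{eq:grating} vanishes identically there. I will state explicitly that nonzero-Doppler lags are not claimed.

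For part (ii), I expand $[\bm\Omega^H\bm\Omega]_{\ell,\ell'}=\sum_{(n,m)\in\mathcal P}|X_{n,m}|^2e^{j2\pi n(\ell-\ell')/N}$. On pilots $|X_{n,m}|^2=p_p$, so the entry becomes $p_p\sum_n c_p\,e^{j2\pi n(\ell-\ell')/N}$ by balance. This equals $p_pc_pN\,\delta_{\ell\ell'}=p_pD_p\delta_{\ell\ell'}$ for $|\ell-\ell'|<K_h\le N$. Thus \eqref{eq:gram} holds and \eqref{eq:trace} and \eqref{eq:sinreff} follow verbatim.

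The main subtlety is that the argument uses only balance, not the specific staggered form \eqref{eq:placement}. So the remaining obligation is to confirm that offsets $\mathbf o$ achieving $\mathcal B$ exist, for instance when $|\mathcal T|$ is a multiple of $f_p'$ and the offsets cycle through all residues. I would note this briefly rather than prove it in general, since the proposition assumes a placement in $\mathcal B$.
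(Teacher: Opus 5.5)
Your proposal is correct and follows the paper's proof in Appendix~D essentially step for step. Part (i) rests on the per-subcarrier mean power $c_pp_p+(M-c_p)p_d=M$, whose zero-Doppler DFT is a single spike at lag zero, and part (ii) reduces the Gram entry to $c_p$ copies of a full-band exponential sum. Your added caveats are accurate refinements rather than a different route: only balance is used, nonzero-Doppler lags are not claimed, and $K_h\le N$ is needed for the Kronecker delta.
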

\begin{IEEEproof}
Please refer to Appendix~D.
\end{IEEEproof}

Proposition~\ref{prop:balanced} is strictly stronger than the windowing admissibility rule below Lemma~\ref{lem:mean}, since it enlarges the admissible power-asymmetry set to the full interval $\gamma\in[\eta,\gamma_{\max}]$ of \eqref{eq:boost_cap} and thereby makes the entire boost range available to the operating-point selection. Proposition~\ref{prop:balanced} guarantees that this structural gain costs nothing on the communication side, since the channel estimator and the closed-form rate remain those of the conventional comb. The displaced pilot-power fluctuation reappears instead as a two-dimensional mask ambiguity, parameterized by the offset vector $\mathbf o$ that determines the pilot set $\mathcal P$ which is given by,
\begin{equation}
    A_{\mathbf o}(q,k)=\frac{1}{D_p^2}
    \bigg|\sum_{(n,m)\in\mathcal P}
    e^{-j2\pi(\frac{qn}{N}+\frac{km}{M})}
    -\eta D\,\delta_{q,0}\delta_{k,0}\bigg|^2,
    \label{eq:mask_amb}
\end{equation}
which is nonzero only at $k\neq 0$ and is itself a design degree of freedom, exercised in two steps. First, the pilot-bearing symbol positions are chosen aperiodically. A periodic symbol subset concentrates \eqref{eq:mask_amb} into deterministic spikes at the Doppler harmonics of its period, whereas an aperiodic subset spreads the same energy across the Doppler axis. Second, for the chosen positions, the offset permutation minimizes the weighted cost
\begin{equation}
    \min_{\mathbf o}\;\Big\{
    \max_{(q,k)}A_{\mathbf o}
    +w_t\,A_{\mathbf o}^{\text{top-}J}
    +w_r\!\!\max_{(q,k)\in\mathrm{ROI}}\!\!A_{\mathbf o}
    +w_m\,\bar A_{\mathbf o}^{\mathrm{ROI}}\Big\},
    \label{eq:mask_opt}
\end{equation}
where $A_{\mathbf o}^{\text{top-}J}$ is the mean of the $J$ largest values of $A_{\mathbf o}$ outside the origin, $\bar A_{\mathbf o}^{\mathrm{ROI}}$ is its mean over the declared ROIs, and $w_t,w_r,w_m>0$ are fixed weights prioritizing the declared range-Doppler ROIs. Since the balanced placements form a permutation class, \eqref{eq:mask_opt} is solved by randomized restarts and balance-preserving pairwise offset swaps. When $p_p\neq p_d$, the nonzero-Doppler harmonic lags of the pilot lattice are additionally included in $\mathcal S$, so that no deterministic component of the mask ambiguity survives inside the processed region.

\vspace{-1.0\baselineskip}
\subsection{Constellation Shaping and Pilot-Data Power Allocation}
\label{sec:gcs}

The remaining variables are the constellation $\mathcal X$ and the power split $\gamma$. They enter the floor only through $\keff$ in \eqref{eq:keffdef} and enter the rate through \eqref{eq:rate}, so tightening the constellation toward constant modulus and boosting the pilots both lower the floor and both reduce the payload rate. The rate constraint \eqref{eq:ratecon} resolves this trade-off, and the pair $(\mathcal X^\star,\gamma^\star)$ is selected jointly. The selection proceeds in two stages. A family of $C_4$-symmetric geometric constellation shaping (GCS) sets, ordered in $\kappa_2$, is first constructed. The operating pair is then selected on this family by a joint optimization on the rate boundary.

We parameterize an $L$-point constellation by $L/4$ free complex points replicated under $90^\circ$ rotations, $\mathcal X(\mathbf z)=\{j^rz_u:\ r\in\{0,1,2,3\},\ u=1,\ldots,L/4\}$, normalized to unit average power. The $C_4$ symmetry guarantees an exactly zero mean and I/Q balance for every iterate, so that \eqref{eq:c4_moments} holds by construction. The design objective is the exact finite-alphabet mutual information of the equiprobable AWGN channel $Y=x+n$, $n\sim\mathcal{CN}(0,1/s)$ with $s$ the evaluation SNR,
\begin{equation}
    I_{\mathcal X}(s)=\log_2 L-\frac{1}{L}\sum_{i=1}^{L}
    \mathbb E_{n}\!\left[\log_2\sum_{j=1}^{L}
    e^{-s\left(|x_i-x_j+n|^2-|n|^2\right)}\right],
    \label{eq:gh_mi}
\end{equation}
evaluated by Monte Carlo sampling during training and by Gauss-Hermite quadrature for the rate constraint \eqref{eq:ratecon}.

For each receive filter, we solve the receiver-matched moment-constrained problem
\begin{equation}
    \max_{\mathcal X\in C_4}\;
    I_{\mathcal X}(\mathrm{SNR}_c^{\rm des})
    \quad\text{s.t.}\quad
    \Gamma(\mathcal X)\le\tau_\rho,
    \label{eq:gcs_opt}
\end{equation}
where the sensing statistic $\Gamma$ is matched to the deployed receiver following Lemmas~\ref{lem:keff} and~\ref{lem:noise}. Specifically, we set $\Gamma=\mu_4$, equivalently $\kappa_2=\mu_4-1$, for the ROI-MMF, whose floor statistic the MF shares, and $\Gamma=\nu_{-2}$ for the RF. The constraint level $\tau_\rho$ is swept over seven values between the unconstrained communication-optimal endpoint, $\rho=0$, and the constant-modulus endpoint, $\rho=6$, which corresponds to $L$-PSK. Since \eqref{eq:gcs_opt} is nonconvex, to avoid poor local optima, each level is solved by penalized gradient ascent with multiple restarts from QAM, concentric ring, and random initializations, as well as warm-started from the previous and looser level and the run with the largest mutual information among those that satisfy the constraint is retained. Traversing the levels by using this method keeps the constellations nested, an ordering used by the operating-point selection.

The receiver-matched constellation sets are then re-gridded onto a common set of seven equal-MI targets $\{I_\rho\}$ by exchanging the roles of the objective and the constraint,
\begin{equation}
    \min_{\mathcal X\in C_4}\;\Gamma(\mathcal X)
    \quad\text{s.t.}\quad
    I_{\mathcal X}(\mathrm{SNR}_c^{\rm des})\ge I_\rho,
    \label{eq:gcs_mi}
\end{equation}
solved by the same penalized multi-restart ascent. The resulting levels are matched across the receiver-specific sets, as summarized in Table~\ref{tab:family}, so that every receiver comparison that follows is iso-rate at every level. The constellation sets are constructed once at $\mathrm{SNR}_c^{\rm des}=10$~dB and fixed thereafter, and the operating point among them is selected below at the declared operating SNR.

\begin{table}[t!]
\centering
\caption{Equal-MI $L{=}16$ shaped constellation levels at the design
SNR of $10$~dB.}
\label{tab:family}
\begin{tabular}{lccccccc}
\hline
$\rho$       & 0     & 1     & 2     & 3     & 4     & 5     & 6     \\
\hline
MI (bit)     & 3.200 & 3.134 & 3.054 & 2.977 & 2.903 & 2.828 & 2.746 \\
$\kappa_2$   & 0.402 & 0.190 & 0.128 & 0.092 & 0.060 & 0.033 & 0.000 \\
$\nu_{-2}$   & 2.040 & 1.339 & 1.207 & 1.134 & 1.080 & 1.038 & 1.000 \\
\hline
\end{tabular}
\end{table}

A communication SNR, $\mathrm{SNR}_c$ and the QoS requirement $\zeta$ are prescribed, and the reference rate $R_{\rm base}$ in \eqref{eq:ratecon} is instantiated as the rate of the 16-QAM frame at uniform power and the reference density,
\begin{equation}
    R_{\rm base}
    =(1-\eta)\,I_{16\text{-}\mathrm{QAM}}
    \big(\mathrm{SINR}_{\rm eff}(\eta,\eta)\big).
    \label{eq:rbase}
\end{equation}
With the grating lobes removed by Proposition~\ref{prop:balanced}, the operating-point selection extends from $\mathcal X$ alone to the pair $(\mathcal X,\gamma)$,
\begin{align}
    (\mathcal X^\star,\gamma^\star)
    &=\arg\min_{\mathcal X\in C_4,\; \gamma\in[\eta,\gamma_{\max}]}\; \keff=(1-\eta)\,p_d^2\,\kappa_2(\mathcal X) \label{eq:joint_opt}\\
    \text{s.t.}\quad &(1-\eta)\,I_{\mathcal X}\big(\mathrm{SINR}_{\rm eff}(\gamma,\eta)\big)\ge\zeta R_{\rm base}, \nonumber \\
    &d_{\min}(\mathcal X)\ge d_0, \nonumber
\end{align}
where $d_0$ is a prescribed minimum symbol spacing.

Problem \eqref{eq:joint_opt} is solved by examining the shaping levels of Table~\ref{tab:family} one at a time. For each level, the boost $\gamma$ is found by a bisection on the rate constraint \eqref{eq:ratecon}. For every level with $\kappa_2>0$, the floor decreases strictly with the boost through $p_d^2$, so the optimum $\gamma^\star$ is simply the largest boost that satisfies \eqref{eq:ratecon}. At every evaluated operating point this boost lies strictly below the upper bound $\gamma_{\max}$, so the rate constraint is active at the optimum.

Among the shaping levels of Table~\ref{tab:family}, the design selects the tightest level whose boundary rate still satisfies \eqref{eq:ratecon}. Since the levels are nested and ordered in both rate and $\kappa_2$, this level attains the smallest feasible $\keff$. Finally, we remark on the role of the pilot density. The density $\eta$ is a prescribed parameter fixed by the standard, and the QoS baseline is defined at the same density. The design is solved once per density and is density-robust, since the $(1-\eta)$ prefactor common to $\keff$ and to the baseline floor cancels in their ratio and the boundary payload powers are nearly equal, so that every $\eta$ yields the same shaping level and floor reduction.

\begin{comment}
\begin{table}[t!]
\centering
\caption{Three-step design procedure for $(\mathcal{P}_1)$.}
\label{tab:procedure}
\begin{tabular}{L{1.05cm}L{1.25cm}L{5.2cm}}
\hline
Step & Variable & Mechanism and result \\
\hline
$\mathbf h$-step & $\mathbf h(\cdot)$ &
Closed form per frame: the ROI-MMF \eqref{eq:roimmf} via the Woodbury
form \eqref{eq:woodbury}, with cost scaling in $|\mathcal S|$. \\
$\mathcal P$-step & $\mathcal P$ &
Structural: balanced staggering removes the deterministic term of
\eqref{eq:explicit_obj} while preserving the channel-estimation Gram,
and the residual ambiguity is minimized by \eqref{eq:mask_opt}. \\
$(\kappa_2,\gamma)$-step & $\mathcal X,\gamma$ &
Discrete selection on the rate boundary: \eqref{eq:joint_opt} solved by
selecting the tightest level that satisfies \eqref{eq:ratecon}. \\
\hline
\end{tabular}
\end{table}
The complete design is compiled in Table~\ref{tab:procedure}.
\end{comment}

The design is thus obtained in three steps. The receive filter follows in closed form from \eqref{eq:woodbury} for each transmitted frame, the placement is fixed by the balanced staggered construction of \eqref{eq:mask_opt}, and the constellation and the power split are selected by \eqref{eq:joint_opt}. We now show in what sense this sequential procedure recovers the joint optimum of $(\mathcal P_1)$.

\begin{proposition}\label{prop:opt}
The three-step procedure attains the first-order optimum of $(\mathcal{P}_1)$ over the shaping levels of Table~\ref{tab:family}, up to the residual harmonic-lag ambiguity of the balanced pilot placement. The neglected terms are smaller than the retained ones by the factor $|\mathcal S|/D$.
\end{proposition}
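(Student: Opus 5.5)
The plan is to follow the three-step structure and show that each step is optimal for its own block of variables, and that the blocks decouple to first order in $|\mathcal S|/D$.

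\emph{Step 1: eliminate the filter exactly.} The objective \eqref{eq:master} and the gain constraint \eqref{eq:gaincon} both act separately on each realization $\mathbf x$. Hence, for every fixed $(\mathcal X,\mathcal P,\gamma)$, the inner minimum over $\mathbf h(\cdot)$ is attained pointwise by $\mathbf h_{\rm ROI}$ of \eqref{eq:woodbury}. This step is exact, with no approximation. It gives $\min_{\mathbf h}(\mathcal P_1)=\min_{\mathcal X,\mathcal P,\gamma}\mathcal J(\mathcal X,\mathcal P,\gamma)$ under \eqref{eq:setcon} and \eqref{eq:ratecon}, where $\mathcal J$ is the value function evaluated by Theorem~\ref{thm:decomp}.

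\emph{Step 2: expand $\mathcal J$ to first order.} By \eqref{eq:explicit_obj}, \eqref{eq:omega_vals} and Lemma~\ref{lem:noise},
\[
\mathcal J=\lambda D\,\delta_N^{\rm ROI}+\mathcal A(\mathcal P,\gamma)+\keff\Big[\tfrac{\lambda}{\lambda+D}\Big]^2D|\mathcal S|,
\]
where $\mathcal A$ is the deterministic ambiguity term with $\bar u_n\approx\bar p_n$. Lemma~\ref{lem:noise} gives $\lambda D\delta_N^{\rm ROI}\approx\lambda D\,(1+\keff|\mathcal S|/D)$. Thus, apart from the constant $\lambda D$, both the noise term and the payload term are nondecreasing affine functions of $\keff$ with positive coefficients. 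These coefficients depend only on $(\lambda,D,|\mathcal S|)$, and the dropped terms are $O((|\mathcal S|\keff/D)^2)$ relative to them.

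\emph{Step 3: show that the pilot term separates.} Write $\bar p_n=p_d+(p_p-p_d)B_n$. The uniform part vanishes at every $\Delta\neq(0,0)$ by the argument of Lemma~\ref{lem:mean}, so $\mathcal A=(p_p-p_d)^2D_p^2\sum_{\mathcal S}A_{\mathbf o}(q,k)$, with $A_{\mathbf o}$ as in \eqref{eq:mask_amb}. For any balanced placement, Proposition~\ref{prop:balanced}(i) removes all zero-Doppler contributions. The nonzero-Doppler harmonic lags are included in $\mathcal S$, so the notches suppress them by the factor $[\lambda/(\lambda+D)]^2$. What remains is the residual harmonic-lag ambiguity that the statement excludes, and minimizing it is exactly \eqref{eq:mask_opt}. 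Proposition~\ref{prop:balanced}(ii) keeps $\mathrm{SINR}_{\rm eff}$ and hence the rate constraint independent of $\mathbf o$. The choice of $\mathbf o$ therefore affects neither the feasible set in $(\mathcal X,\gamma)$ nor the $\keff$ terms. The problem decouples into $\min_{\mathbf o}\mathcal A$ and $\min_{\mathcal X,\gamma}\keff$ subject to \eqref{eq:ratecon}, which is \eqref{eq:joint_opt}.

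I expect the main obstacle here. $\mathcal A$ still carries the factor $(p_p-p_d)^2$, which depends on $\gamma$, so the decoupling is not exact. I would bound it by comparing orders: after notching, the harmonic terms are scaled by $[\lambda/(\lambda+D)]^2$, and the off-harmonic mask ambiguity is lumped into the excluded residual. The $\gamma$-dependence then enters only through that excluded residual.

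\emph{Step 4: optimality of the $(\mathcal X,\gamma)$ selection.} Fix a shaping level with $\kappa_2>0$. The rate $I_{\mathcal X}(\mathrm{SINR}_{\rm eff}(\gamma,\eta))$ is monotone in $\gamma$ over the active range, and $\keff\propto p_d^2=((1-\gamma)/(1-\eta))^2$ is strictly decreasing in $\gamma$. The minimizer is therefore the largest feasible $\gamma$, which the bisection finds. Across levels the family is nested, so both the mutual information and $\kappa_2$ decrease with $\rho$. The boundary $\keff$ is then minimized by the tightest feasible level, and a finite comparison over the seven levels confirms this. Combining Steps 1--4, the sequential procedure attains the minimum of the first-order surrogate of $\mathcal J$ over the Table~\ref{tab:family} levels, and the neglected terms are smaller by the factor $|\mathcal S|/D$.
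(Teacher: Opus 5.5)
Your proposal is correct and follows essentially the same route as the paper's proof. The shared steps are: pointwise elimination of the filter; the Theorem~\ref{thm:decomp} decomposition with the deterministic pilot term discarded via Proposition~\ref{prop:balanced} up to the excluded harmonic-lag residual (the paper drops it as a lower-bound step rather than by your explicit decoupling); reduction to \eqref{eq:joint_opt} because the remaining noise and payload terms both increase in $\keff$; and selection of the largest feasible boost and the tightest feasible level. The one claim you leave unproven, that the rate is monotone in $\gamma$ on the active range, is supplied in the paper by writing $\mathrm{SINR}_{\rm eff}=[\mathrm{tr}(\mathbf C_\epsilon)+\sigma_c^2/p_d]^{-1}$ with a denominator convex in $\gamma$, so the rate is quasiconcave, the feasible boost set is an interval, and bisection locates its right endpoint.
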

\begin{IEEEproof}
Please refer to Appendix~E.
\end{IEEEproof}

\begin{remark}\label{rem:procedure}
The proposed procedure addresses $(\mathcal{P}_1)$ in a single pass. The filter step is exact and the placement step is structural, whereas the selection of $(\mathcal X,\gamma)$ is a discrete search, which the strict monotonicity of the objective in $\keff$ established in Theorem~\ref{thm:decomp} reduces to the choice of a single shaping level.
\end{remark}
\vspace{-1\baselineskip}
% =====================================================================

\section{Numerical Results}\label{sec:results}

In this section, we validate the theoretical framework and the proposed joint design through comprehensive numerical simulations. Unless stated otherwise, we consider $N=256$ subcarriers, $M=32$ OFDM symbols, and a bandwidth of $50$~MHz corresponding to a range-cell size of $3$~m, with a regularization parameter $\lambda=0.1$, a declared $\mathrm{SNR}_c=15$~dB at reference pilot density $\eta=1/8$, a carrier frequency of $f_c=28$~GHz with a CP of $T_{\rm cp}=T_{\rm u}/8$, which gives a velocity-cell size of $29.1$~m/s, and $K_h=8$ channel taps. The evaluated scene places an on-grid clutter return at cell $(0,0)$ at $40$~dB above the noise floor, target~1 at $(-24,0)$, and target~2 at $(18,8)$, both at a target SNR of $10$~dB. The declared ROIs are the boxes of half-widths $(6,3)$ and $(8,3)$ around the two targets, and the notch set $\mathcal S$ is the pairwise relative-shift construction defined below \eqref{eq:p1all}. Each Monte Carlo point is averaged over $1000$ to $3000$ independent trials. The placement cost \eqref{eq:mask_opt} uses the weights $(w_t,w_r,w_m)=(0.25,0.60,0.15)$ with $J=24$, minimized over $600$ random balanced initializations of which the eight best are refined by $500$ pairwise offset swaps each, and the best four are retained. All quoted floor and notch levels are trial-averaged mean powers.

The MF, RF, and ROI-MMF baselines are evaluated with the transmitted signal under fixed 16-QAM at uniform power, on the lattice pilot frame with $\gamma=\eta$ for the map, cut, and detection figures, and on an all-payload frame for the estimation benchmark. The proposed design operates at the selected shaping level $\kappa_2=0.128$ with the boost $\gamma^\star$ on the rate boundary and is labeled \emph{Full joint} in the figures. All rates are reported through the same effective SINR \eqref{eq:sinreff} at the declared $\mathrm{SNR}_c$.
\vspace{-1.0\baselineskip}
\subsection{Joint Constellation and Pilot Design}\label{sec:joint_results}

Fig.~\ref{fig:GCS} shows the $L\in\{16,64\}$ constellation sets obtained by maximizing the mutual information under the receiver-matched moment constraint in \eqref{eq:gcs_opt}, swept from the sensing-optimal constant-modulus endpoint to the communication-optimal endpoint. For each order, the $\mu_4$-constrained set is shared by the MF and the ROI-MMF, the $\nu_{-2}$-constrained set serves the RF, and each panel is annotated with the equal-MI target and the corresponding moment at the design SNR. At the constant-modulus endpoint, the two constellation sets of each order coincide with $L$-PSK, for which $\mu_4 = 1 (\kappa_2 = 0)$ and $\nu_{-2}=1$. Toward the communication-optimal endpoint the $\mu_4$-constrained set reduces the power dispersion at nearly uniform ring occupancy, whereas the $\nu_{-2}$-constrained set additionally repels points from the origin, consistent with the receiver-specific statistics of Lemmas~\ref{lem:keff} and~\ref{lem:noise}. The equal-MI guarantees that, at every level, the constellation sets of each order operate at identical mutual information.

\begin{figure}[t!]
    \centering
    {\includegraphics[width=0.45\textwidth]{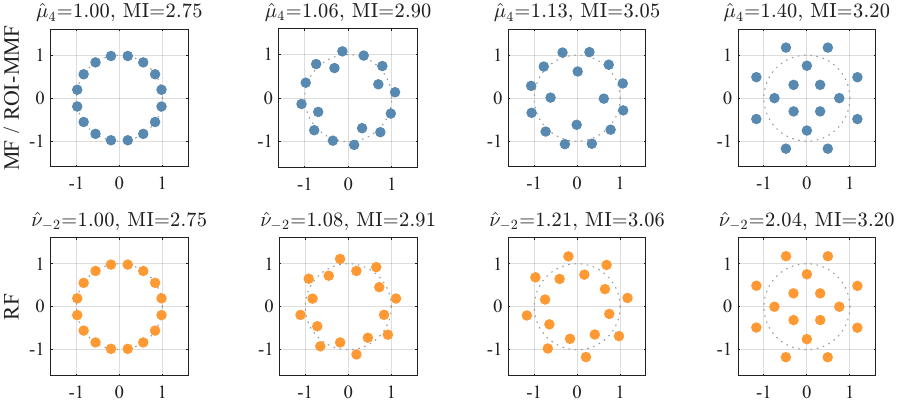}}\\
    {\small (a)}\\
    {\includegraphics[width=0.45\textwidth]{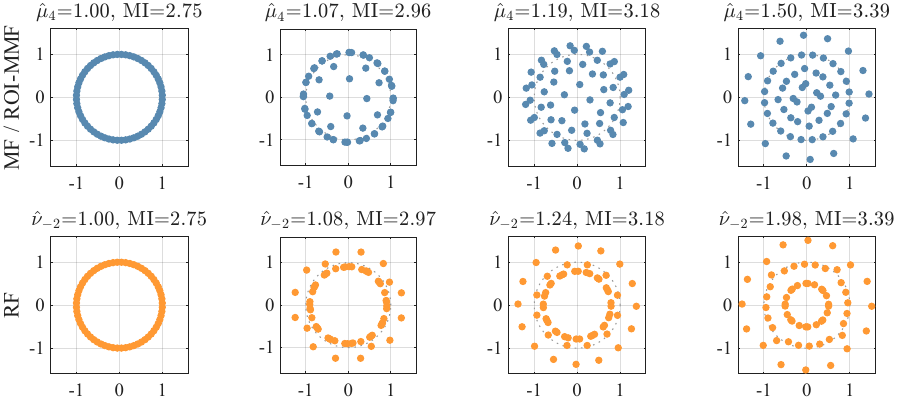}}\\
    {\small (b)}\\
    \caption{Receiver-specific $C_4$-symmetric GCS constellation sets
    for (a) $L=16$ and (b) $L=64$, swept over the seven
    equal-MI shaping levels at the design SNR of $10$~dB.}
    \label{fig:GCS}
\end{figure}

\begin{figure}[t!]
    \centering
    \setlength{\tabcolsep}{1pt}
    \begin{tabular}{cc}
    \includegraphics[width=0.49\columnwidth]{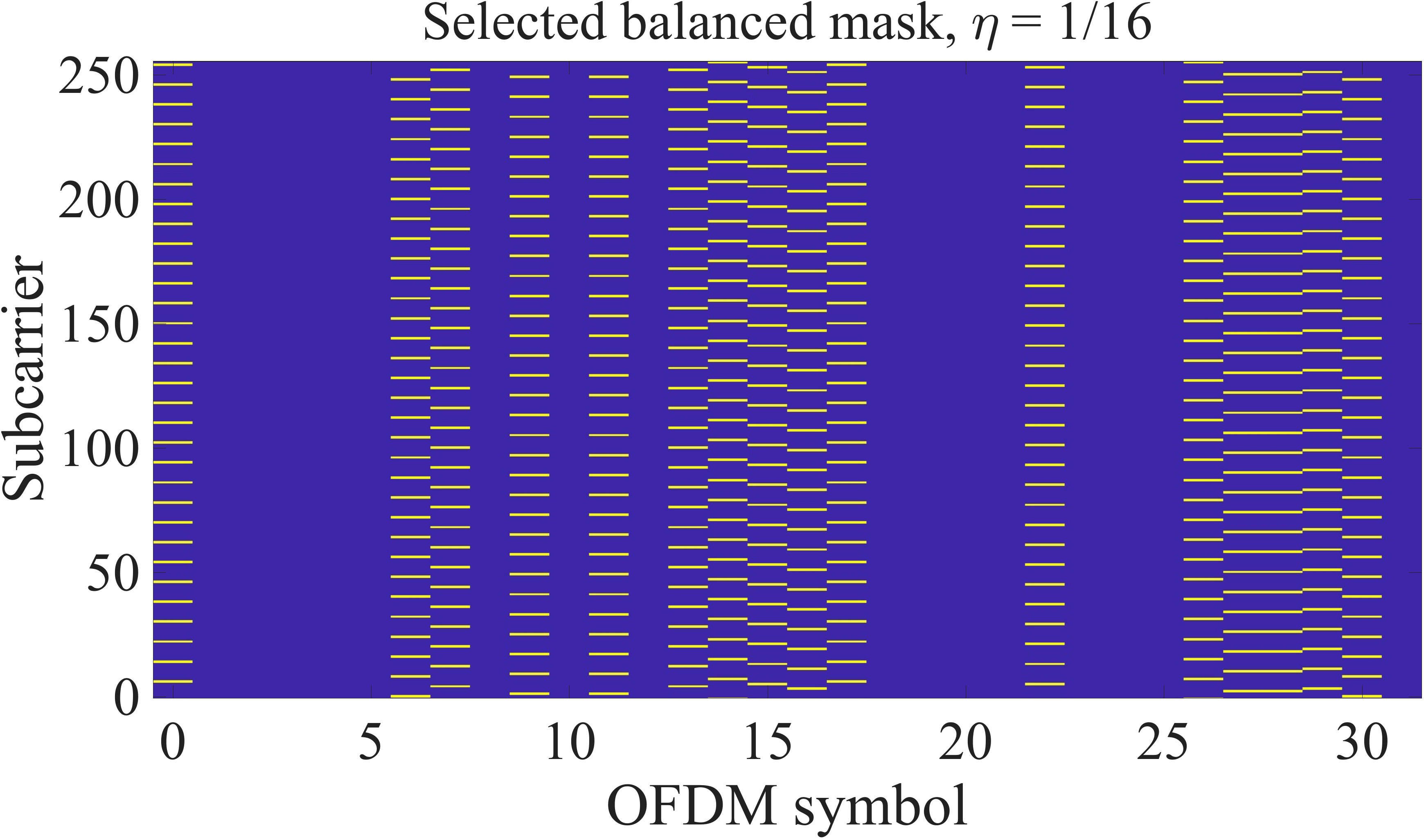} &
    \includegraphics[width=0.49\columnwidth]{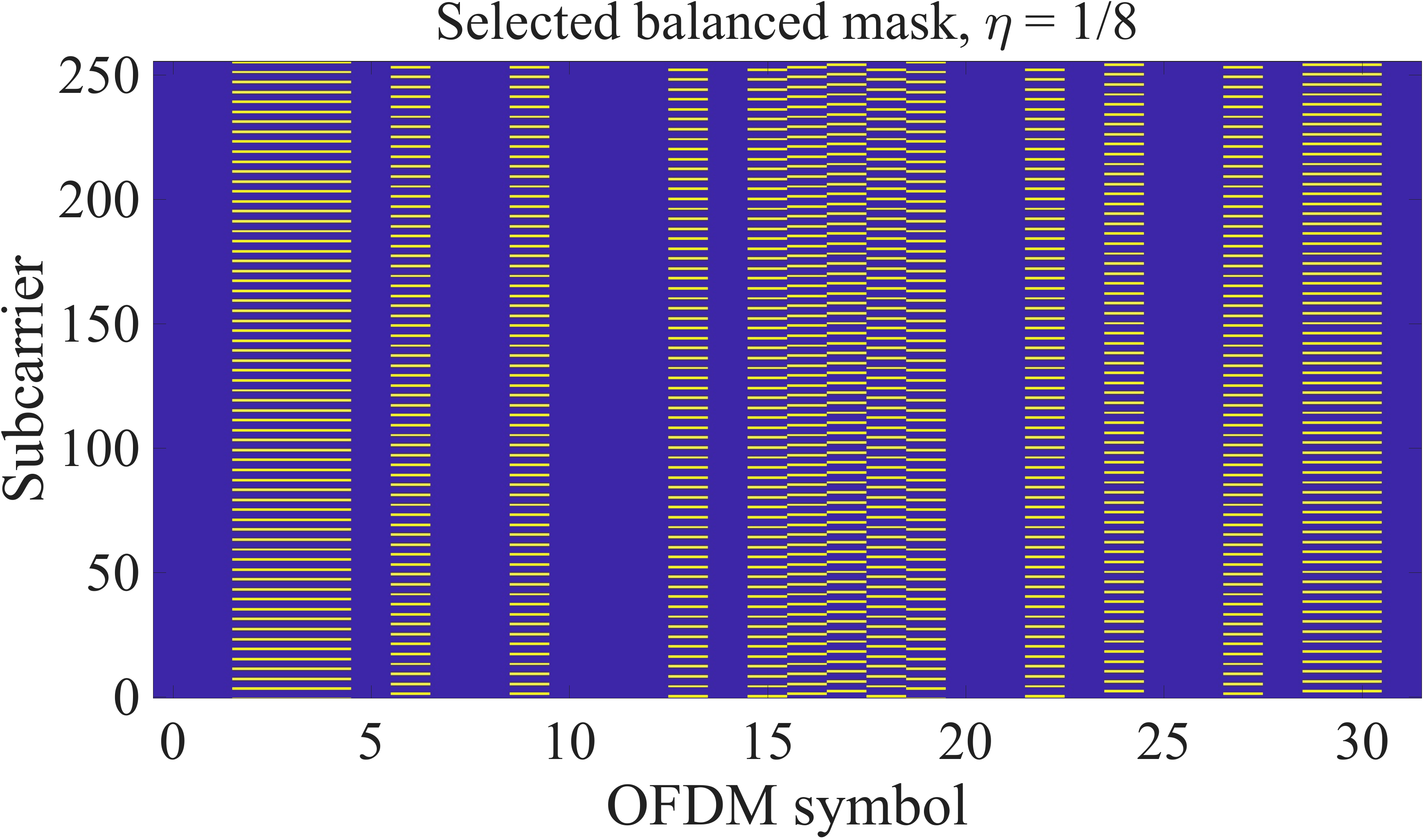} \\
    {\small (a)} & {\small (b)} \\
    \includegraphics[width=0.49\columnwidth]{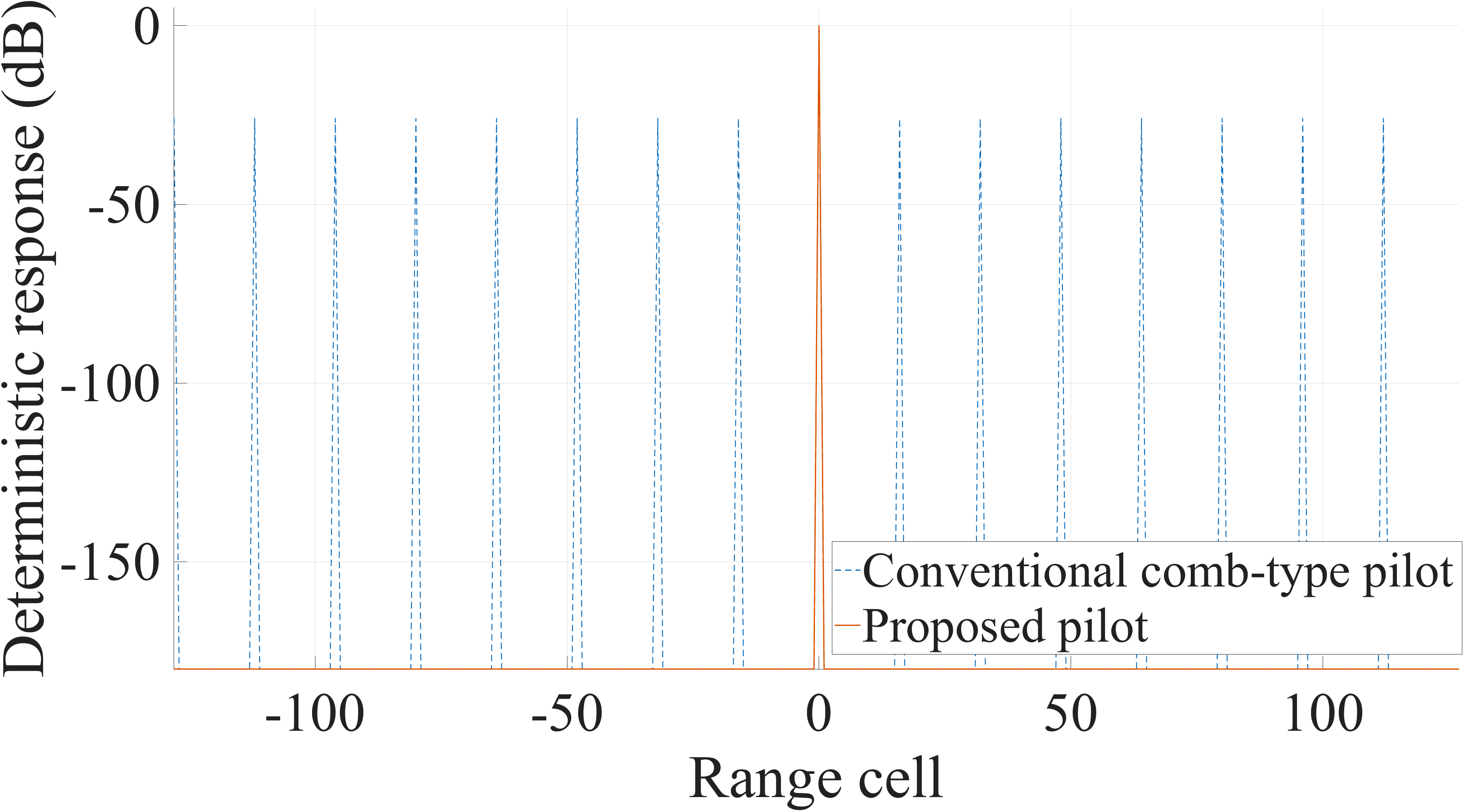} &
    \includegraphics[width=0.49\columnwidth]{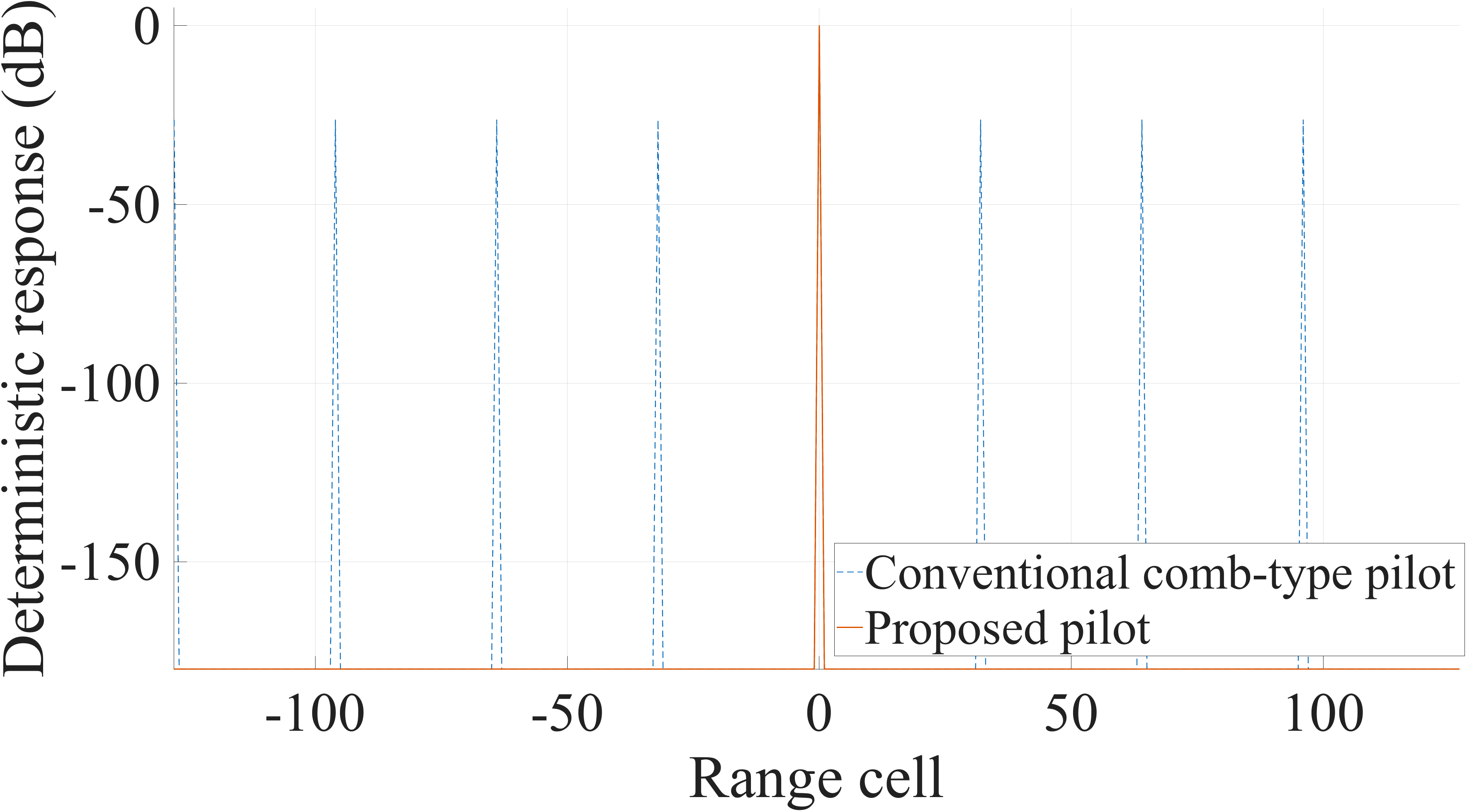} \\
    {\small (c)} & {\small (d)} \\
    \end{tabular}
    \caption{Pilot-design diagnostics for $\eta=1/16$ and $1/8$ at the
    operating boosts $\gamma^\star$:
    (a) and (b) the balanced staggered masks, and (c) and (d) the
    zero-Doppler responses of the boosted conventional vertical comb and
    the balanced placement.}
    \label{fig:pilot_diag}
\end{figure}

Fig.~\ref{fig:pilot_diag} verifies the two halves of Proposition~\ref{prop:balanced} at the operating point. Despite the $\gamma^\star>\eta$ boost, the balanced placement leaves no deterministic grating lobes on the zero-Doppler response, whereas the conventional vertical comb under the same boost exhibits grating lobes exactly at the harmonic lags $\pm N/f_p,\pm2N/f_p,\ldots$ with the closed-form level \eqref{eq:grating}. Furthermore, the optimized offset ordering confines the displaced pilot ambiguity \eqref{eq:mask_amb} to a few nonzero-Doppler alias lines outside the declared ROIs.
\vspace{-0.7\baselineskip}
\subsection{Sensing Performance of Full-Frame OFDM-ISAC}\label{sec:sensing_results}

Fig.~\ref{fig:cuts} examines the post-filter floor predicted by \eqref{eq:keff} on the zero-Doppler and Doppler cuts, comparing the 16-QAM and the equal-MI shaped MF and RF baselines at $\gamma=\eta$ with the proposed joint design at its fixed operating point. The MF outputs retain a high floor over the entire cut, under both the 16-QAM and the shaped constellation, and the weak targets do not stand clearly above. The RF outputs are flat near the noise level and carry the $\nu_{-2}$ enhancement of Lemma~\ref{lem:noise} in place of a payload floor. In contrast, the proposed joint design suppresses the declared regions down to the noise level, so that the weak targets become clearly visible there, and its uncovered floor lies below the MF floor at the declared QoS level $\zeta=0.98$, as the reduction of $\keff$ predicts. Moreover, the boost $\gamma^\star>\eta$ produces no grating lobes, which confirms that the balanced placement removes the deterministic grating response, as established in Proposition~\ref{prop:balanced}.

Fig.~\ref{fig:rdmap} repeats the comparison on the two-dimensional map, where masking is decided cell by cell. Under the MF the response at both target cells is $-44.5$~dB, which is the level of the $\keff$ floor rather than of the targets, so that neither target is separable from its surroundings. Under the proposed joint design both targets stand $+49$~dB above the notch level and are clearly detectable. The uncovered floor also drops by $4.3$~dB relative to the 16-QAM ROI-MMF, in close agreement with the $4.4$~dB reduction of $\keff$ predicted at the declared QoS level.

\begin{figure}[t!]
    \centering
    \begin{minipage}[b]{0.49\columnwidth}
        \centering
        \includegraphics[width=\linewidth]{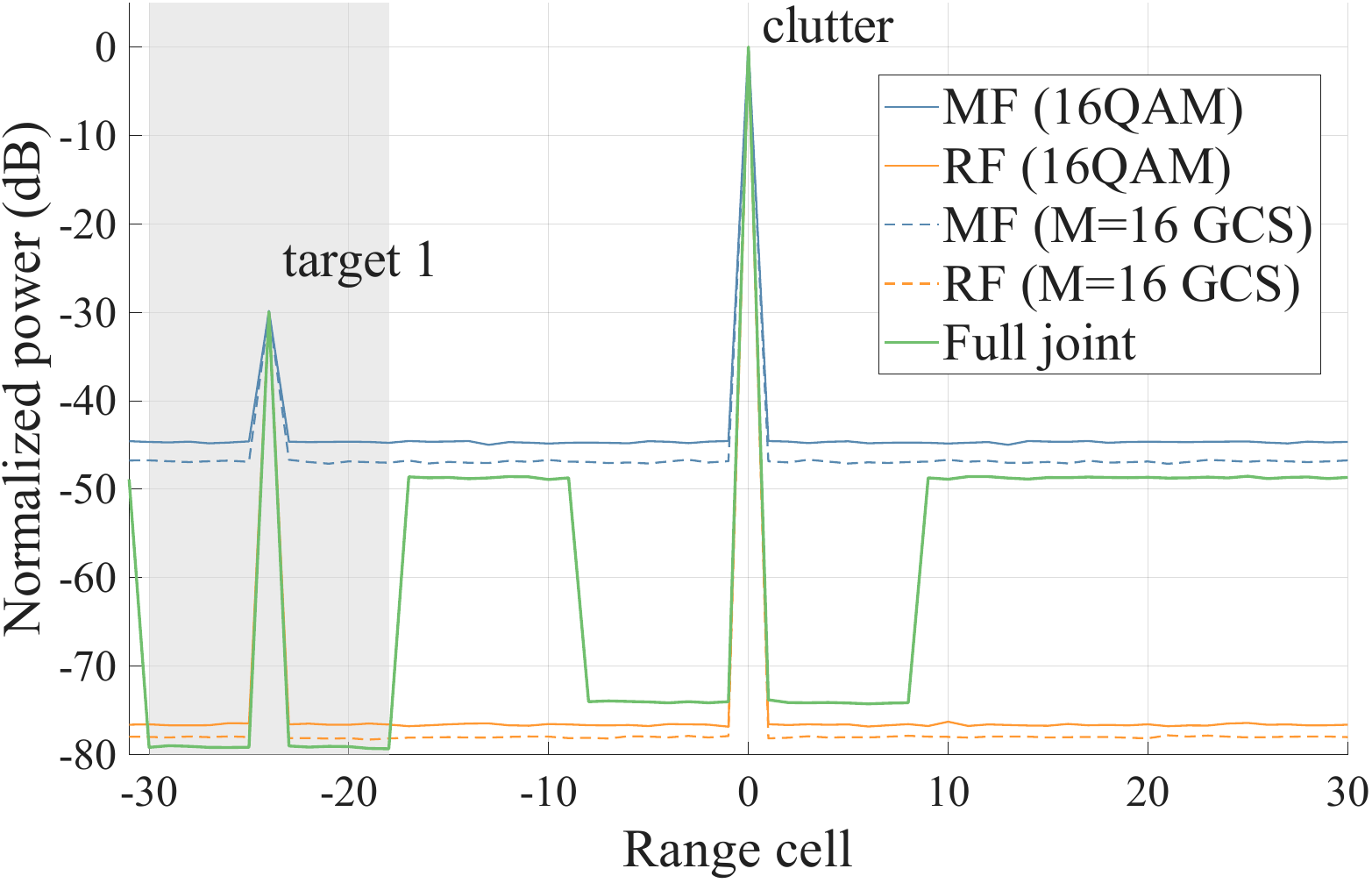}\\
    \end{minipage}\hfill
    \begin{minipage}[b]{0.49\columnwidth}
        \centering
        \includegraphics[width=\linewidth]{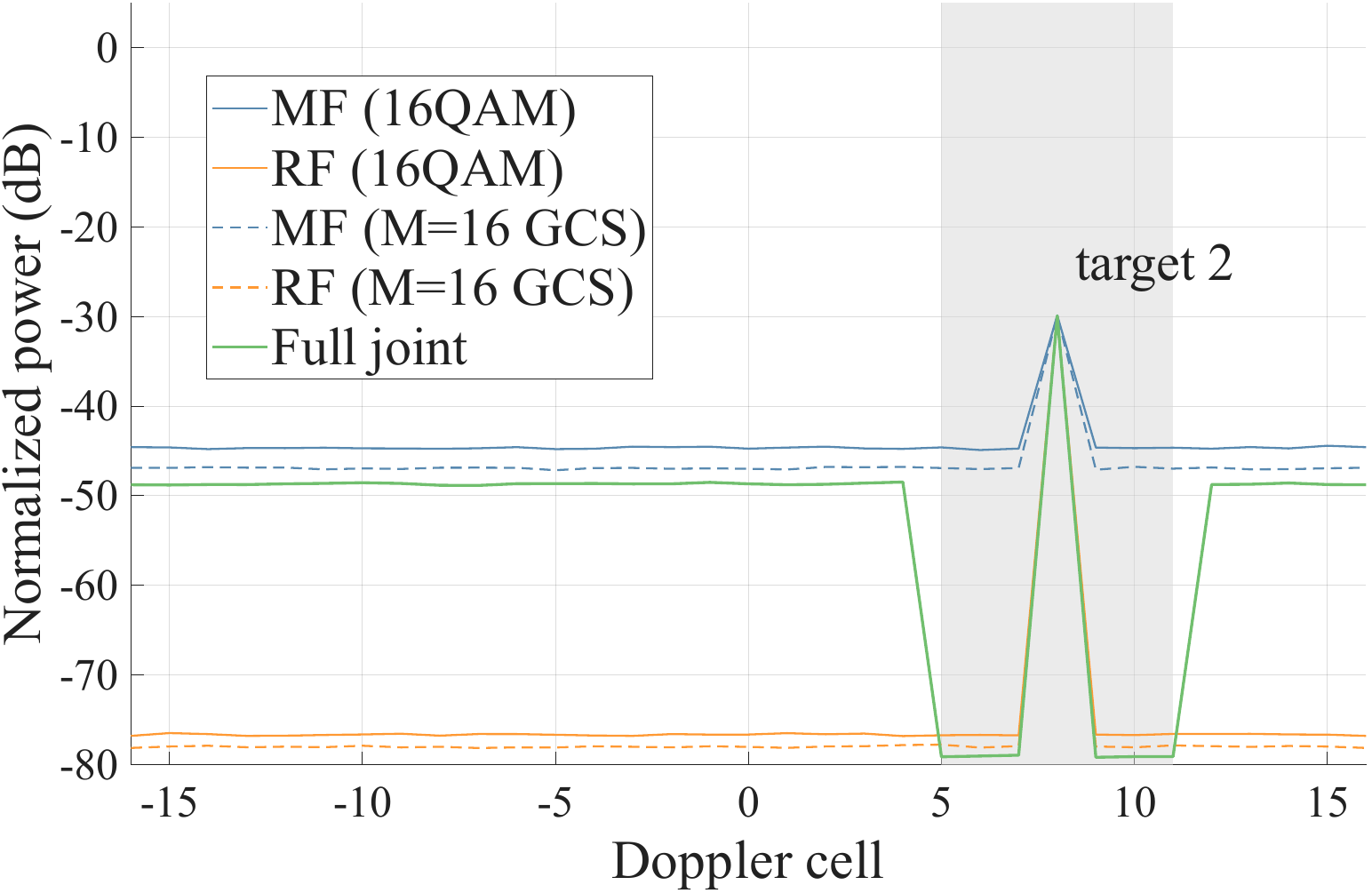}\\
    \end{minipage}

\caption{Range-Doppler scene cuts at $\eta=1/8$
($\zeta=0.98$, clutter at $40$~dB, target SNR $10$~dB). Left shows the zero-Doppler range cut through the clutter and target~1, and right shows the Doppler cut at the target-2 range.}
    \label{fig:cuts}
\end{figure}

\begin{figure}[t!]
    \centering
    \includegraphics[width=\linewidth]{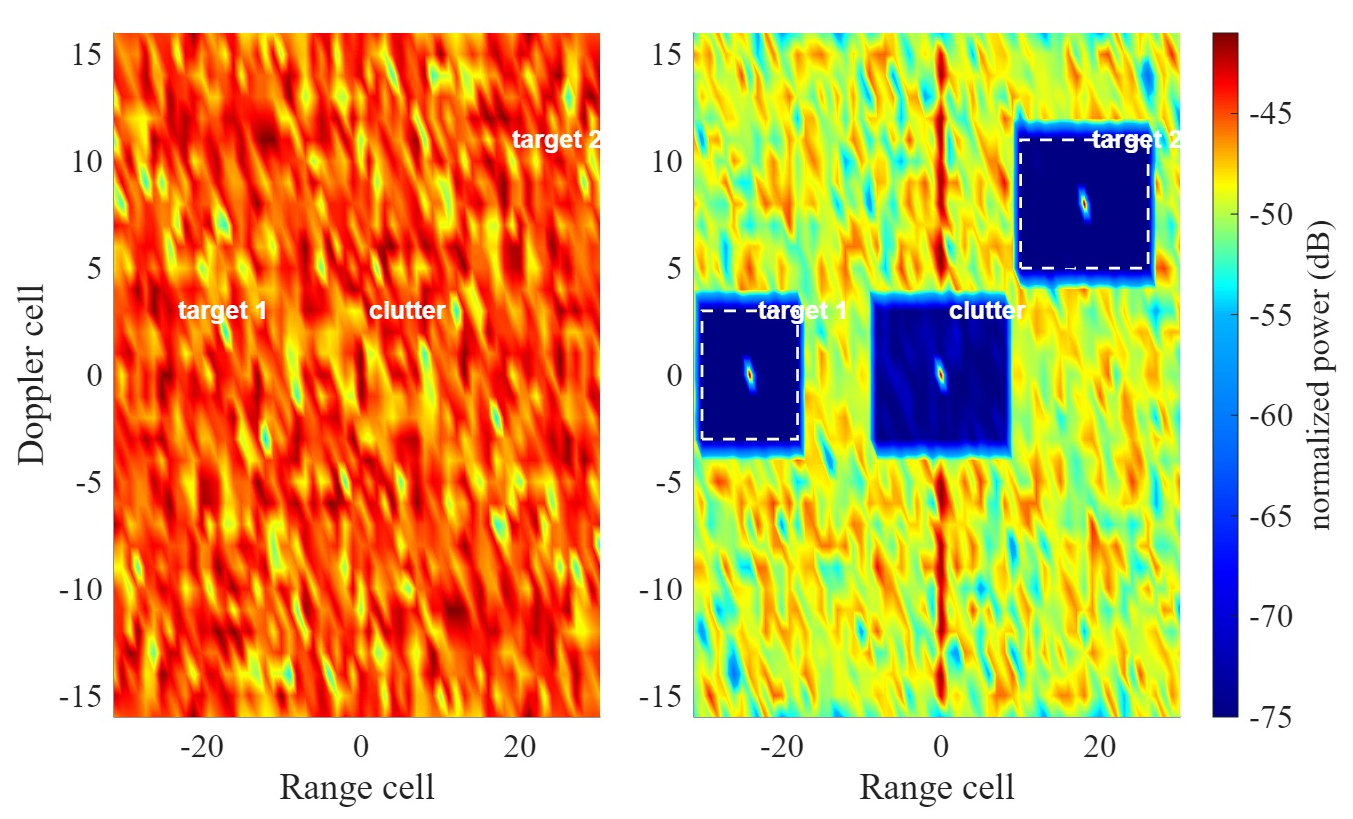}
    \caption{Range-Doppler maps. Left is the result of 16-QAM MF baseline. Right is the result of
    proposed joint design ($\eta=1/8$, $\gamma^\star=0.167$,
    $\zeta=0.98$). Dashed boxes mark the declared ROIs.}
    \label{fig:rdmap}
\end{figure}

Fig.~\ref{fig:rmse}(a) examines the delay-estimation accuracy of the MF and the RF, evaluated on both the fixed 16-QAM and the equal-MI shaped geometries, and of the proposed joint design, where the range estimates are obtained by the matrix-pencil estimator \cite{sarkar1995using}, as in \cite{yang2026constellation}. The dotted lines are the root CRB (RCRB). Applying the standard Fisher information analysis yields the per-target bounds
\begin{equation}
\begin{aligned}
    \mathrm{RCRB}_{\hat r_i}
    &=\frac{c}{2N\Delta f}
    \sqrt{\frac{\sigma^{2}N^{2}}
    {8\pi^{2}|\alpha_i|^{2}\sum_{n_f=-N/2}^{N/2-1}n_f^{2}}},\\
    \mathrm{RCRB}_{\hat v_i}
    &=\frac{c}{2f_cMT_{\rm o}}
    \sqrt{\frac{\sigma^{2}M^{2}}
    {8\pi^{2}N|\alpha_i|^{2}\sum_{n_t=-M/2}^{M/2-1}n_t^{2}}},
\end{aligned}
    \label{eq:crb}
\end{equation}
in meters and meters per second \cite{han2026constellation}, where the leading factors are the range-cell and the velocity-cell sizes of Section~\ref{sec:sensing_model}, the subcarrier and symbol indices run symmetrically about the band center and the frame center, and the factor $N$ in the velocity bound is the coherent gain of the range matching. The MF curves flatten for both the 16-QAM and the shaped payload, and the two floors differ by the amount by which shaping reduces $\keff$. The RF curve remains parallel to the RCRB over the whole sweep, consistent with the constant multiplicative gap $\sqrt{\nu_{-2}}$. Finally, the proposed joint design follows the bound most closely and reaches $0.025$~m against a bound of $0.023$~m at a target SNR of $+10$~dB. Fig.~\ref{fig:rmse}(b) repeats the comparison on the Doppler axis, where the same estimator is applied to the slow-time sequence at each target range. The same ordering is observed, and the proposed joint design again follows the bound most closely, reaching $0.042$~m/s against a bound of $0.040$~m/s at the same target SNR.

Fig.~\ref{fig:pd} examines the weak-target detection probability under CFAR processing at $P_{\rm FA}=10^{-4}$ inside the declared ROIs, which is governed by the post-filter floor of Lemma~\ref{lem:keff} and therefore by $\keff$. By \eqref{eq:keff_mf}, the covered clutter inflates the CFAR training cells by the floor excess
\begin{equation}
    \Delta_{\rm ex}
    =10\log_{10}\Big(1+\frac{|\alpha_c|^2\keff}{\sigma^2}\Big).
    \label{eq:excess}
\end{equation}
The measured excess is $+34.5$~dB for the 16-QAM MF, in agreement with \eqref{eq:excess}, and $+32.2$~dB for the shaped payload. The RF excess stays below $+2.5$~dB and reflects its $\nu_{-2}$ penalty instead, while the proposed design shows essentially zero excess because the covered interference is suppressed inside the region. Consequently, the proposed design achieves the smallest detection threshold, with the $P_D=0.9$ thresholds ranging from $-27$~dB for the proposed design to $+7$~dB for the MF, a separation of $34$~dB, consistent with the $\keff$ ordering of Lemma~\ref{lem:keff}.
\vspace{-0.7\baselineskip}
\subsection{Sensing-Communication Trade-Off}\label{sec:tradeoff_results}

\begin{figure}[t!]
    \centering
    \begin{minipage}{0.49\linewidth}\centering
        \includegraphics[width=\linewidth]{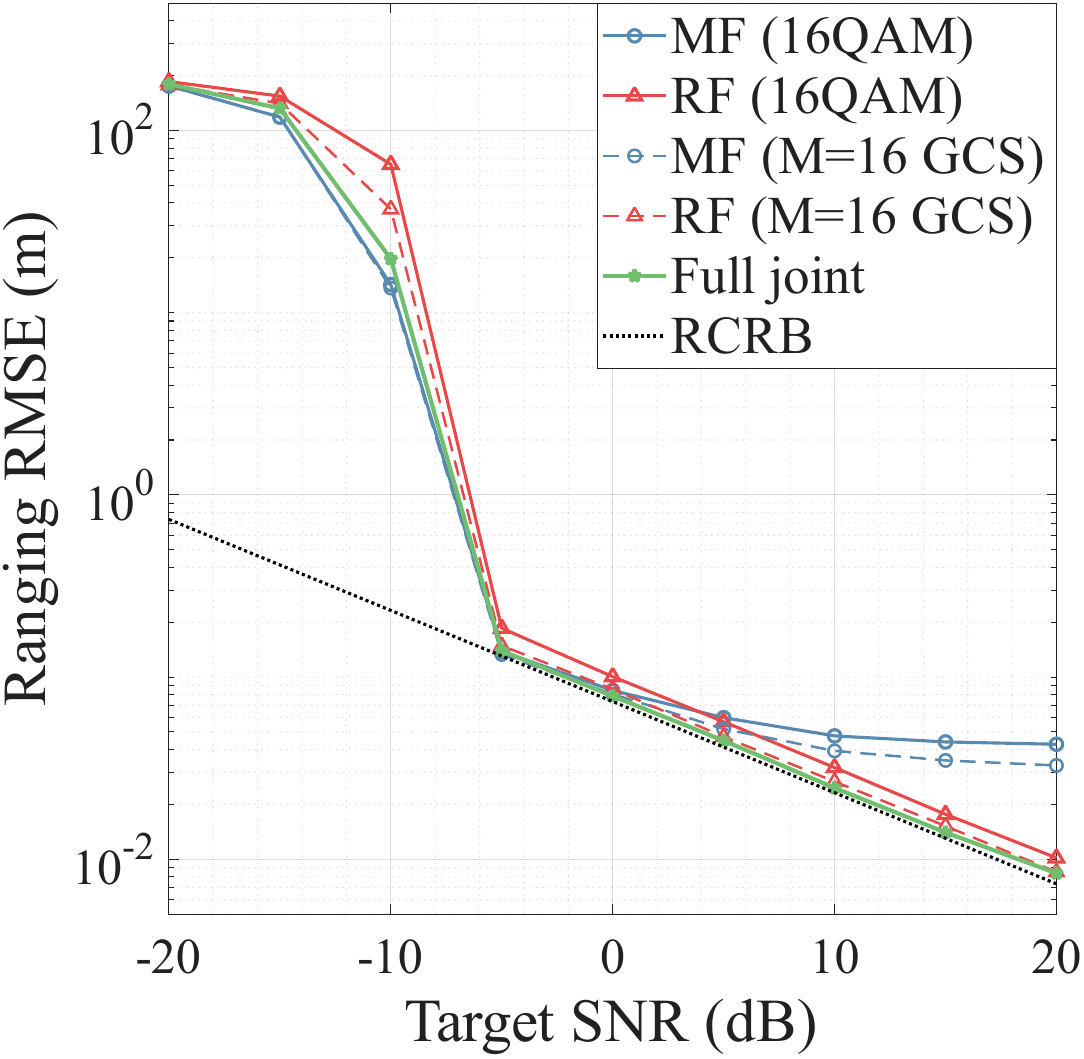}\\[2pt]
        {\footnotesize (a)}
    \end{minipage}\hfill
    \begin{minipage}{0.49\linewidth}\centering
        \includegraphics[width=\linewidth]{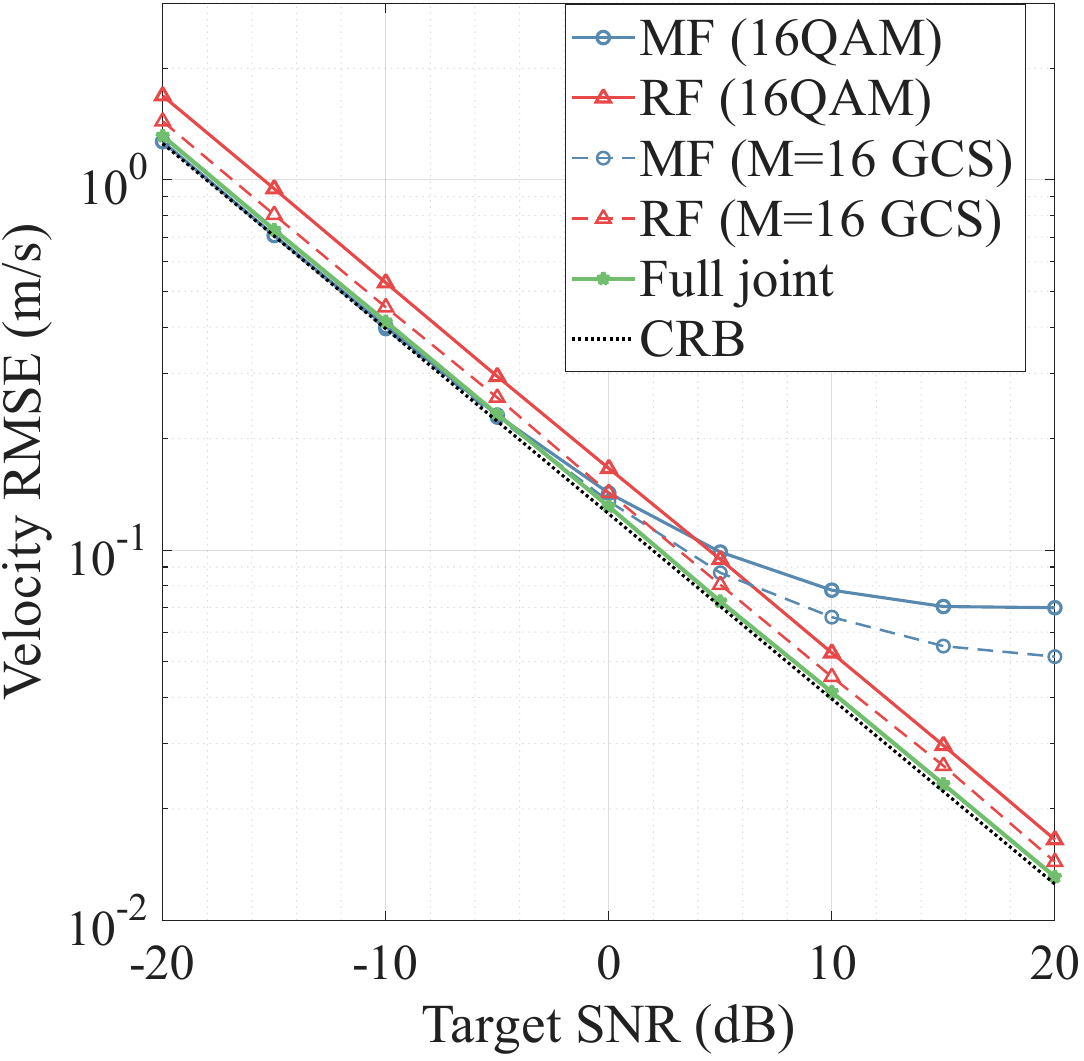}\\[2pt]
        {\footnotesize (b)}
    \end{minipage}
    \caption{(a) Ranging RMSE and (b) velocity RMSE versus target SNR
    for the two targets inside the declared ROIs ($\eta=1/8$,
    $\zeta=0.98$).}
    \label{fig:rmse}
\end{figure}

\begin{figure}[t!]
    \centering
    \includegraphics[width=0.95\linewidth]{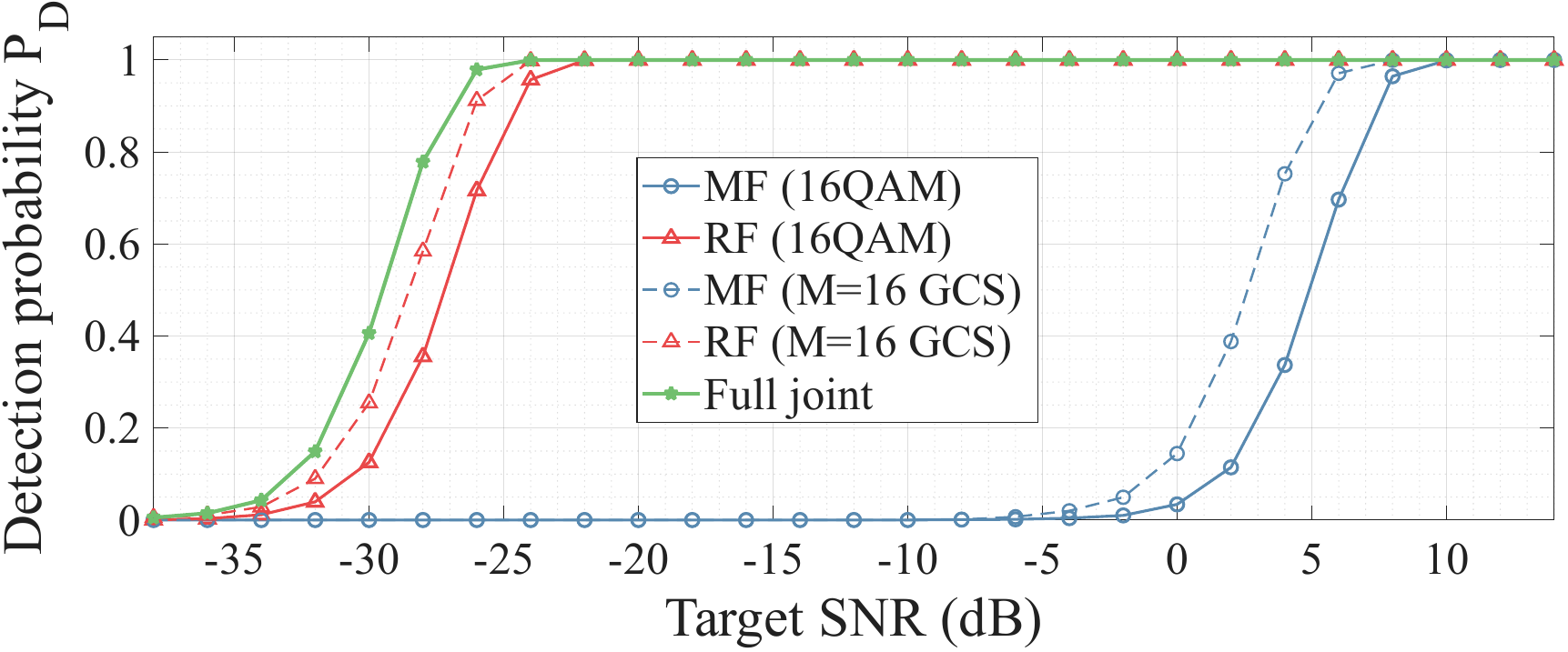}
    \caption{Weak-target detection probability under CFAR.}
    \label{fig:pd}
\end{figure}

Finally, Fig.~\ref{fig:tradeoff} traces the overall sensing-communication trade-off that the rate constraint \eqref{eq:ratecon} imposes, by sweeping the shaped constellations through the estimation benchmark under a one-transmit-sweep equal-MI convention, in which the MF, the RF, and the proposed joint design receive the same transmitted sweep, drawn from the equal-MI shaped family of the proposed design, at all seven shaping levels. At the constant-modulus endpoint the three chains coincide, at $0.0233$ and $0.0231$~m for $L=16$ and $64$, where the proposed design releases its boost, $\gamma\to\eta$ at $\kappa_2=0$. Toward the communication-optimal endpoint the MF more than doubles its RMSE and the RF incurs a growing noise penalty, while the proposed joint design remains lowest at every level, with margins over the RF of $13\%$ and $22\%$ for $L=16$ and $64$. The growth of the margin with the order arises because higher-order shaped payloads place points nearer the origin, so that $\nu_{-2}$ grows with the order, consistent with Lemma~\ref{lem:noise}. These results confirm that the proposed joint design extends the attainable sensing-communication trade-off region beyond that of any fixed transmit-receive combination.

\vspace{-1\baselineskip}
% =====================================================================
\section{Conclusion}\label{sec:concl}

This paper presented a joint transmit-receive design framework for full-frame OFDM-ISAC, in which the modulation constellation, the pilot placement and power split, and the receive filter are designed as a single optimization under a prescribed communication-rate requirement. The analysis showed that the payload-induced interference floor of the proposed ROI-MMF is governed by a single statistic, the effective centered second moment. Guided by this result, the receive filter is obtained in closed form, the balanced staggered pilot placement removes the deterministic grating lobes without sacrificing the channel-estimation accuracy, and the constellation-power operating point is selected on the rate boundary. Simulations confirmed that the closed-form expressions match the measurements, that weak targets masked under conventional processing are restored to clear detectability at the declared QoS level, and that the ranging and velocity estimation accuracy closely tracks the CRB. The entire OFDM frame, pilots and data payloads alike, can therefore serve as a high-quality sensing reference at a small and declared communication-rate cost.
\begin{figure}[t!]
    \centering
    \begin{minipage}{0.49\linewidth}\centering
        \includegraphics[width=\linewidth]{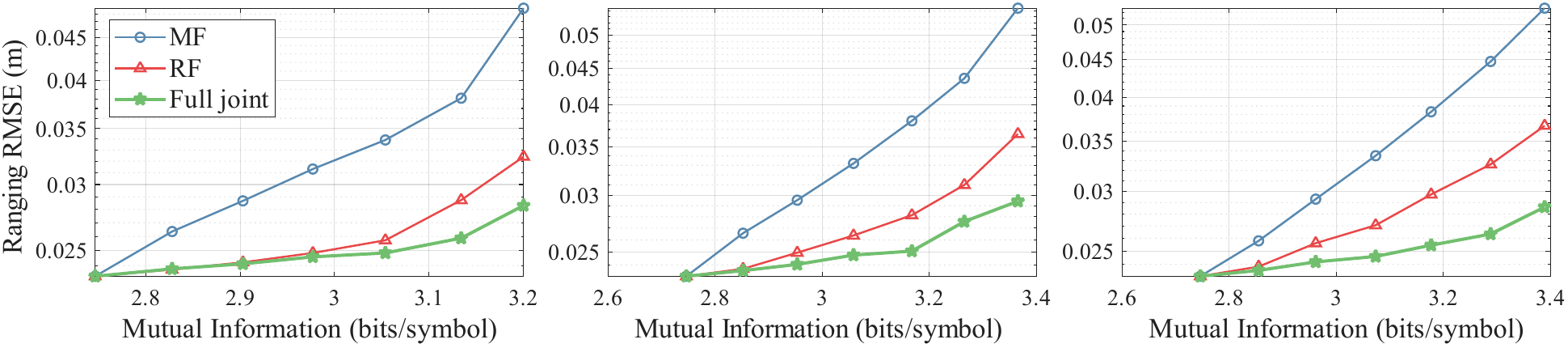}\\[2pt]
        {\footnotesize (a)}
    \end{minipage}\hfill
    \begin{minipage}{0.49\linewidth}\centering
        \includegraphics[width=\linewidth]{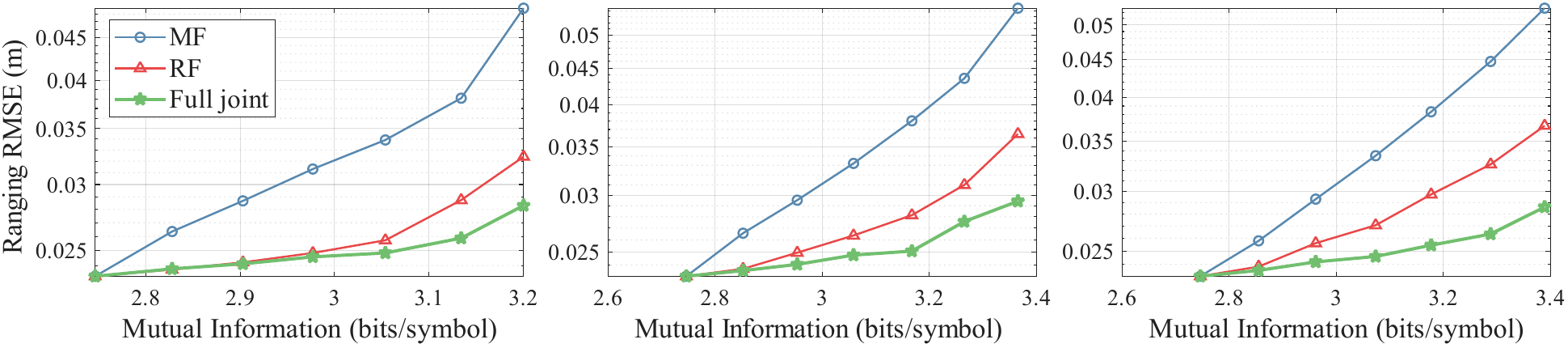}\\[2pt]
        {\footnotesize (b)}
    \end{minipage}
    \caption{Sensing-communication trade-off between the ranging RMSE
    and the payload mutual information for (a) $L=16$ and (b) $L=64$.}
    \label{fig:tradeoff}
\end{figure}
\vspace{-1\baselineskip}
% =====================================================================

\section*{Appendix A: Proof of Lemma~\ref{lem:keff}}
\label{app:lemma2}

Throughout, $\bm\phi_\Delta$ denotes $\bm\phi(q,k)$ for a lag $\Delta=(q,k)$.

Substituting \eqref{eq:split} into \eqref{eq:probe} gives
\begin{equation}
\begin{aligned}
    \chi(\Delta)
    &=\frac{\alpha_c}{D}\sum_{n=0}^{D-1}\bar p_n e^{-j\theta_\Delta(n)}
    +\chi_\psi+\chi_w, \\[3pt]
    \chi_\psi&=\frac{\alpha_c}{D}\sum_{n\in\mathcal D}
    \psi_n\,e^{-j\theta_\Delta(n)},
\end{aligned}
    \label{eq:app_split}
\end{equation}
where the deterministic sum vanishes off the comb harmonics by Lemma~\ref{lem:mean}. Since the $\psi_n$ are i.i.d. and zero-mean by \eqref{eq:fstats},
\begin{equation}
    \mathbb E\big[|\chi_\psi|^2\big]
    =\frac{|\alpha_c|^2}{D^2}\sum_{n\in\mathcal D}\mathrm{Var}(\psi_n)
    =\frac{|\alpha_c|^2}{D^2}\,D_d\,p_d^2\kappa_2
    =\frac{|\alpha_c|^2\keff}{D},
    \label{eq:app_floor}
\end{equation}
with $D_d/D=1-\eta$, and the filtered-noise power is $\mathbb E[|\chi_w|^2]=\sigma^2\|\mathbf h\|^2/D^2=\sigma^2\delta_N/D$. Adding the two gives \eqref{eq:keff_mf}. For the RF, $a_n=1$ in \eqref{eq:gain_fact}, so the noise term alone gives \eqref{eq:keff_rf}. The pilot cells satisfy $\psi_n=0$ by their constant modulus, which proves the pilot-exclusion claim.

Write \eqref{eq:woodbury} as $\mathbf h_{\rm ROI}=\beta(\mathbf x-\mathbf Z\mathbf b)/\lambda$ with $\mathbf b=(\lambda\mathbf I+\mathbf Z^H\mathbf Z)^{-1}\mathbf Z^H\mathbf x$. The covered-lag responses are the entries of
\begin{equation}
    \mathbf Z^H\mathbf h_{\rm ROI}
    =\frac{\beta}{\lambda}\big(\mathbf Z^H\mathbf x
    -\mathbf Z^H\mathbf Z\,\mathbf b\big)
    =\beta\,\big(\lambda\mathbf I+\mathbf Z^H\mathbf Z\big)^{-1}
    \mathbf Z^H\mathbf x.
    \label{eq:app_covered}
\end{equation}
With $\mathbf Z^H\mathbf Z\approx D\,\mathbf I$ and $\beta\approx\lambda$ (Appendix~B), and $\mathbb E\big|[\mathbf Z^H\mathbf x]_\Delta\big|^2=D\keff$ by \eqref{eq:app_floor},
\begin{equation}
    \mathbb E\big|\mathbf h_{\rm ROI}^H
    (\mathbf x\odot\bm\phi_\Delta)\big|^2
    \approx\Big(\frac{\lambda}{\lambda+D}\Big)^{2}D\keff,
    \qquad \Delta\in\mathcal S,
    \label{eq:app_covered_level}
\end{equation}
which, combined with the filtered-noise power above, yields \eqref{eq:keff_roi}. At an uncovered lag $\Delta\notin\mathcal S$,
\begin{equation}
    \mathbf h_{\rm ROI}^H(\mathbf x\odot\bm\phi_\Delta)
    =\frac{\beta}{\lambda}\Big[\mathbf x^H(\mathbf x\odot\bm\phi_\Delta)
    -\sum_{i\in\mathcal S}b_i^{*}\,
    \mathbf z_i^H(\mathbf x\odot\bm\phi_\Delta)\Big],
    \label{eq:app_uncovered}
\end{equation}
where the first term has power $D\keff$ by \eqref{eq:app_floor}. In the second, $\mathbb E|b_i|^2\approx D\keff/(\lambda+D)^2$ by \eqref{eq:app_covered} and the $|\mathcal S|$ summands are uncorrelated, so
\begin{equation}
    \mathbb E\Big|\sum_{i\in\mathcal S}b_i^{*}\,
    \mathbf z_i^H(\mathbf x\odot\bm\phi_\Delta)\Big|^2
    \!\!\approx\!|\mathcal S|\,\frac{D\keff}{(\lambda+D)^2}\,D\,\keff
    \!\!\approx\!|\mathcal S|\,(\keff)^2,
    \label{eq:app_crossterm}
\end{equation}

whose ratio to the direct term $D\keff$ is $|\mathcal S|\keff/D$, so the uncovered floor is unchanged to first order. 
\hfill$\blacksquare$
\vspace{-0.7\baselineskip}
\section*{Appendix B: Proof of Lemma~\ref{lem:noise}}
\label{app:lemma3}

\emph{1. MF:} The unit-gain normalization gives
\begin{equation}
    \delta_N^{\rm MF}=\frac{D}{\|\mathbf x\|^2},
    \qquad
    \mathbb E\|\mathbf x\|^2=D,\quad
    \mathrm{Var}\big(\|\mathbf x\|^2\big)=D\keff,
    \label{eq:app_mf_noise}
\end{equation}
so $\delta_N^{\rm MF}=1$ up to the relative fluctuation $\sqrt{\keff/D}$.

\emph{2. RF:} With $h_n=1/x_n^{*}$, the constraint $\mathbf h^H\mathbf x=D$ holds exactly and
\begin{equation}
    \mathbb E\|\mathbf h\|^2
    =\sum_{n\in\mathcal P}\frac{1}{p_p}
    +\sum_{n\in\mathcal D}\frac{1}{p_d}
    \mathbb E\Big[\frac{1}{|s_n|^2}\Big]
    =\frac{D_p}{p_p}+\frac{D_d}{p_d}\nu_{-2},
    \label{eq:app_rf_noise}
\end{equation}
which is \eqref{eq:deltaN_rf}.

\emph{3. ROI-MMF:} The constrained minimizer of \eqref{eq:roimmf} is $\mathbf h=\mu(\lambda\mathbf I+\mathbf Z\mathbf Z^H)^{-1}\mathbf x$ with $\mu$ fixed by $\mathbf h^H\mathbf x=D$, so that, with $\mathbf A=(\lambda\mathbf I+\mathbf Z\mathbf Z^H)^{-1}$,
\begin{equation}
    \delta_N=\frac{\|\mathbf h\|^2}{D}
    =\frac{D\,\mathbf x^H\mathbf A^2\mathbf x}
    {(\mathbf x^H\mathbf A\mathbf x)^2}.
    \label{eq:app_roi_quot}
\end{equation}
Split $\mathbf x$ into its component in the constraint span and the orthogonal remainder. The Gram concentrates on a scaled identity,
\begin{equation}
    [\mathbf Z^H\mathbf Z]_{ii}=\|\mathbf x\|^2\approx D,
    \quad
    \mathbb E\big|[\mathbf Z^H\mathbf Z]_{ij}\big|^2=D\keff,
    \,\text{\small$i\neq j$},
    \label{eq:app_gram}
\end{equation}

since each off-diagonal entry is a fluctuation sum of \eqref{eq:app_floor}. Likewise $\mathbb E|\mathbf z_i^H\mathbf x|^2/D=\keff$, so the span holds $\keff|\mathcal S|$ and the remainder $D-\keff|\mathcal S|$. For the boosted frame, the harmonic lags $\mathcal S_{\rm harm}$ appended to $\mathcal S$ add the mask-ambiguity energy of \eqref{eq:mask_amb}, bounded by
\begin{equation}
    \frac1D\sum_{\Delta\in\mathcal S_{\rm harm}}
    \Big|\sum_{n=0}^{D-1}\bar p_n e^{-j\theta_\Delta(n)}\Big|^2
    \le (p_p-p_d)^2\,\eta(1-\eta)\,D,
    \label{eq:app_parseval}
\end{equation}
since the total over all nonzero lags equals $D\sum_{n=0}^{D-1}(\bar p_n-1)^2=(p_p-p_d)^2\eta(1-\eta)D^2$ by Parseval's theorem, and this addition is negligible at the operating boosts. Since $\mathbf A$ passes the remainder with gain $1/\lambda$ and the span with $1/(\lambda+D)$, both quadratic forms in \eqref{eq:app_roi_quot} are dominated by the remainder,
\begin{equation}
    \mathbf x^H\mathbf A\mathbf x\approx\frac{D-\keff|\mathcal S|}{\lambda},
    \qquad
    \mathbf x^H\mathbf A^2\mathbf x\approx\frac{D-\keff|\mathcal S|}{\lambda^2},
    \label{eq:app_quadforms}
\end{equation}
which, substituted into \eqref{eq:app_roi_quot}, gives
\begin{equation}
    \delta_N^{\rm ROI}=\frac{D}{D-\keff|\mathcal S|}
    \label{eq:app_roi_final}
\end{equation}
up to corrections smaller by $|\mathcal S|/D$, which is \eqref{eq:deltaN_roi}. Moreover, \eqref{eq:app_quadforms} gives the constant $\beta\approx\lambda$ used in Appendix~A. 
\hfill$\blacksquare$
\vspace{-0.7\baselineskip}
\section*{Appendix C: Proof of Theorem~\ref{thm:decomp}}
\label{app:thm1}

Fix the transmit design and a unit-gain filter $\mathbf h(\mathbf x)=\tilde\beta\,(\mathbf x\odot\mathbf w)$, so that $h_n^*x_n=\tilde\beta^{*}w_n^{*}|x_n|^2$. The MF has $\mathbf w=\mathbf 1$ and the ROI-MMF has $w_n=1-\xi_n$ with $\xi_n$ the notch correction \eqref{eq:notch_coef}, whose mean square is $|\mathcal S|\keff/D$ to first order. For a shift $\Delta\in\mathcal S$, define
\begin{equation}
    S_\Delta=\mathbf h^H\big(\mathbf x\odot\bm\phi(\Delta)\big)
    =\sum_{n=0}^{D-1} h_n^{*}x_n\,e^{-j\theta_\Delta(n)}.
    \label{eq:app_stat}
\end{equation}
Splitting each term into its mean and centered parts, $\mathbb E|S_\Delta|^2=|\mathbb E S_\Delta|^2+\mathrm{Var}(S_\Delta)$, and summing over $\mathcal S$ yields
\begin{equation}
\mathbb E\sum_{\Delta\in\mathcal S}\big|S_\Delta\big|^2
=\sum_{\Delta\in\mathcal S}
\Big|\sum_{n=0}^{D-1} \bar u_n e^{-j\theta_\Delta(n)}\Big|^2
+\keff\,\Omega_{\mathbf h}(\mathcal S),
\label{eq:app_decomp}
\end{equation}
where the payload weight is defined for $\keff>0$ as
\begin{equation}
\Omega_{\mathbf h}(\mathcal S)
=\frac{1}{\keff}\sum_{\Delta\in\mathcal S}
\mathrm{Var}\big(S_\Delta\big),
\label{eq:omega_def}
\end{equation}
where the sum collects the payload variance of the covered lags. Adding $\lambda\,\mathbb E\|\mathbf h\|^2$ to \eqref{eq:app_decomp} gives \eqref{eq:explicit_obj}.

To first order in $|\mathcal S|/D$, inserting \eqref{eq:split} into $\bar u_n=\mathbb E[h_n^{*}x_n]$ gives
\begin{equation}
    \bar u_n
    =\bar p_n\,\mathbb E[\tilde\beta^{*}w_n^{*}]
    +\mathbb E[\tilde\beta^{*}w_n^{*}\psi_n],
    \label{eq:app_ubar}
\end{equation}
whose second term is smaller by $|\mathcal S|/D$, so $\bar u_n$ reduces to $\bar p_n$ and the deterministic term of \eqref{eq:explicit_obj} vanishes by Proposition~\ref{prop:balanced} outside the nonzero-Doppler harmonics.

With $\mathbf w=\mathbf 1$ and $\tilde\beta=D/\|\mathbf x\|^2\approx1$,
\begin{equation}
    \mathrm{Var}(S_\Delta)
    =\sum_{n\in\mathcal D}\mathrm{Var}(\psi_n)
    =D\keff
    \;\Longrightarrow\;
    \Omega_{\mathbf h}^{\rm MF}=D\,|\mathcal S|.
    \label{eq:app_omega_mf}
\end{equation}
For the ROI-MMF, inserting \eqref{eq:woodbury} with $\mathbf Z^H\mathbf Z\approx D\mathbf I$ and $\beta\approx\lambda$, as in \eqref{eq:app_covered}, leaves
\begin{equation}
\begin{aligned}
    \mathbf h^H(\mathbf x\odot\bm\phi_\Delta)
    &=\frac{\lambda}{\lambda+D}
    \sum_{n\in\mathcal D}\psi_n e^{-j\theta_\Delta(n)} \\
    \Longrightarrow\;\;
    \Omega_{\mathbf h}^{\rm ROI}
    &=\Big[\frac{\lambda}{\lambda+D}\Big]^{2}D\,|\mathcal S|,
\end{aligned}
    \label{eq:app_omega_roi}
\end{equation}
which establishes \eqref{eq:omega_vals}. The neglected terms are smaller by $|\mathcal S|/D$, and the leading payload term is a positive multiple of $\keff$, so the objective is strictly increasing in $\keff$ to the same accuracy. 
\hfill$\blacksquare$

\vspace{-0.7\baselineskip}
\section*{Appendix D: Proof of Proposition~\ref{prop:balanced}}
\label{app:prop1}
(i) By \eqref{eq:balanced_class} every subcarrier carries the same number $c_p=D_p/N=\eta M$ of pilots, so that for every $n$,
\begin{equation}
\begin{aligned}
    \sum_{m=0}^{M-1}\mathbb E|X_{n,m}|^2
    &=c_pp_p+(M-c_p)p_d \\
    &=M\big(\eta p_p+(1-\eta)p_d\big)=M
\end{aligned}
    \label{eq:flat_profile}
\end{equation}
by \eqref{eq:frame_power_constraint}, independent of $n$ and $\gamma$. The zero-Doppler deterministic response is the DFT of this flat profile, namely a single spike at lag zero. (ii) For taps $\ell,\ell'$, the balanced count makes the pilot subcarrier multiset equal to $\{0,\ldots,N-1\}$ repeated $c_p$ times, so that
\begin{equation}
\begin{aligned}
    [\bm\Omega^H\bm\Omega]_{\ell,\ell'}
    &=p_p\!\!\sum_{(n,m)\in\mathcal P}\!\!e^{j2\pi(\ell-\ell')n/N} \\
    &=p_pc_p\sum_{n=0}^{N-1}e^{j2\pi(\ell-\ell')n/N}
    =p_pD_p\,\delta_{\ell,\ell'},
\end{aligned}
    \label{eq:gram_balanced}
\end{equation}
which completes the proof.

\vspace{-0.7\baselineskip}
\section*{Appendix E: Proof of Proposition~\ref{prop:opt}}
\label{app:prop2}

Let $\mathbf h_{\rm ROI}$ denote the per-realization filter \eqref{eq:woodbury} evaluated at $\mathbf x$. Fix a pair $(\mathcal X,\gamma)$ satisfying \eqref{eq:ratecon}. For every placement $\mathcal P\in\mathcal B$ and every filter policy $\mathbf h(\cdot)$ obeying \eqref{eq:gaincon}, the objective of $(\mathcal{P}_1)$ satisfies
\begin{align}
    &\mathbb E\Big[\lambda\|\mathbf h(\mathbf x)\|^2
      +\!\sum_{\Delta\in\mathcal S}\!\big|\mathbf h(\mathbf x)^{H}
      (\mathbf x\odot\bm\phi_\Delta)\big|^{2}\Big]\nonumber\\
    &\overset{(a)}{\ge}\;
      \lambda\,\mathbb E\|\mathbf h_{\rm ROI}\|^2
      +\mathbb E\!\sum_{\Delta\in\mathcal S}\!\big|\mathbf h_{\rm ROI}^{H}
      (\mathbf x\odot\bm\phi_\Delta)\big|^{2}\nonumber\\
    &\overset{(b)}{=}\;
      \lambda\,\mathbb E\|\mathbf h_{\rm ROI}\|^2
      +\!\sum_{\Delta\in\mathcal S}\Big|\sum_{n=0}^{D-1}\bar u_n[\bm\phi_\Delta]_n\Big|^2
      +\keff\,\Omega_{\mathbf h}(\mathcal S)\nonumber\\
    &\overset{(c)}{\ge}\;
      \lambda\,\mathbb E\|\mathbf h_{\rm ROI}\|^2+\keff\,\Omega_{\mathbf h}(\mathcal S)
      \; =: \;\mathcal J(\mathcal X,\gamma).
    \label{eq:nested}
\end{align}
Step (a) holds since the expectation decouples across realizations, so the minimum over $\mathbf h(\cdot)$ is the per-realization ROI-MMF \eqref{eq:roimmf}--\eqref{eq:woodbury}. Step (b) is \eqref{eq:app_decomp}. Step (c) drops the deterministic term, which vanishes by Proposition~\ref{prop:balanced}(i) at every lag of $\mathcal S$ except the nonzero-Doppler harmonics, where it is bounded by \eqref{eq:app_parseval} and suppressed by the placement cost \eqref{eq:mask_opt}, which serves as a surrogate for this residual. Hence $\mathcal J$ is placement-independent to the stated accuracy.

By \eqref{eq:app_omega_roi} and Lemma~\ref{lem:noise}, both terms of $\mathcal J$ depend on the transmit variables only through $\keff$ and increase with it, so minimizing $\mathcal J$ subject to \eqref{eq:ratecon} is exactly \eqref{eq:joint_opt}. Writing \eqref{eq:sinreff} as
\begin{equation}
    \mathrm{SINR}_{\rm eff}
    =\big[\mathrm{tr}(\mathbf C_\epsilon)+\sigma_c^2/p_d\big]^{-1},
    \label{eq:app_sinr_inv}
\end{equation}
the denominator is convex in $\gamma$, so $\mathrm{SINR}_{\rm eff}$ is quasiconcave, $R$ is quasiconcave through the monotone $I_{\mathcal X}$, and the set feasible for \eqref{eq:ratecon} is an interval. The monotone $\keff$ therefore attains its minimum at the largest feasible boost, and the rate and $\keff$ orderings of the levels select the tightest feasible level. The resulting $(\mathcal X^\star,\gamma^\star)$, together with $\mathcal P^\star$ and $\mathbf h_{\rm ROI}$, is the output of the three steps. This completes the proof.
\hfill$\blacksquare$
\bibliographystyle{IEEEtran}
\bibliography{IEEEabrv,reference}

@STRING{IEEE_J_AES        = "{IEEE} Trans. Aerosp. Electron. Syst."}

@STRING{IEEE_J_STSP       = "{IEEE} J. Sel. Topics Signal Process."}

@STRING{IEEE_J_SP         = "{IEEE} Trans. Signal Process."}

@STRING{IEEE_J_JSAC       = "{IEEE} J. Sel. Areas Commun."}

@STRING{IEEE_J_COM        = "{IEEE} Trans. Commun."}

@STRING{IEEE_J_WCOM       = "{IEEE} Trans. Wireless Commun."}

@STRING{IEEE_J_IT         = "{IEEE} Trans. Inf. Theory"}

@STRING{IEEE_J_IOT        = "{IEEE} Internet Things J."}

@STRING{IEEE_J_PROC       = "Proc. {IEEE}"}

@article{yang2026constellation,
  title={{Constellation-Independent Range Estimation in Payload-Based OFDM-ISAC}},
  author={Yang, Dongil and Meng, Kaitao and Masouros, Christos and Han, Kawon},
  journal={IEEE Wireless Communications Letters},
  year={2026},
  publisher={IEEE}
}

@article{xie2025adaptive,
  title={{Adaptive matched filtering for sensing with communication signals in cluttered environments}},
  author={Xie, Lei and He, Hengtao and Xiong, Yifeng and Liu, Fan and Jin, Shi},
  journal={arXiv preprint arXiv:2512.08157},
  year={2025}
}

@article{han2026next,
  title={{Next-generation MIMO transceivers for integrated sensing and communications: Unique security vulnerabilities and solutions}},
  author={Han, Kawon and Masouros, Christos and Riihonen, Taneli and Amin, Moeness G},
  journal={Proceedings of the IEEE},
  year={2026},
  publisher={IEEE}
}

@article{zhang2024cross,
  title={{Cross-domain dual-functional OFDM waveform design for accurate sensing/positioning}},
  author={Zhang, Fan and Mao, Tianqi and Liu, Ruiqi and Han, Zhu and Chen, Sheng and Wang, Zhaocheng},
  journal={IEEE Journal on Selected Areas in Communications},
  volume={42},
  number={9},
  pages={2259--2274},
  year={2024},
  publisher={IEEE}
}

@article{bouziane2026optimized,
  title={{Optimized Non-Uniform Pilot Pattern for OFDM Sensing}},
  author={Bouziane, Amir and Arslan, H{\"u}seyin},
  journal={IEEE Wireless Communications Letters},
  volume={15},
  pages={2463--2467},
  year={2026},
  publisher={IEEE}
}

@article{boutillon2024constellations,
  title={{Constellations Cross Circular auto-Correlation C4-sequences}},
  author={Boutillon, Emmanuel},
  journal={IEEE Transactions on Communications},
  volume={72},
  number={12},
  pages={7664--7673},
  year={2024},
  publisher={IEEE}
}

@article{morelli2001comparison,
  title={{A comparison of pilot-aided channel estimation methods for OFDM systems}},
  author={Morelli, Michele and Mengali, Umberto and others},
  journal={IEEE Transactions on signal processing},
  volume={49},
  number={12},
  pages={3065--3073},
  year={2001}
}

@article{liu2025probabilistic,
  title={{Probabilistic shaping based ISAC systems with finite constellations: Analysis and optimization}},
  author={Liu, Yichen and Guo, Yinghong and Gu, Yixiao and Wang, Manlin and Liu, Junhua and Xia, Bin},
  journal={IEEE Transactions on Wireless Communications},
  year={2025},
  publisher={IEEE}
}

@article{yang2024constellation,
  title={{Constellation design for integrated sensing and communication with random waveforms}},
  author={Yang, Xiaobo and Zhang, Ruonan and Zhai, Daosen and Liu, Fan and Du, Rui and Han, Tony Xiao},
  journal={IEEE Transactions on Wireless Communications},
  volume={23},
  number={11},
  pages={17415--17428},
  year={2024},
  publisher={IEEE}
}

@article{meng2026constellation,
  title={{Constellation Selection and Power Control for OFDM-based ISAC: From Theory to Prototype}},
  author={Meng, Kaitao and Han, Kawon and Masouros, Christos and Liu, Fan},
  journal={IEEE Transactions on Signal Processing},
  year={2026},
  publisher={IEEE}
}

@article{xu2025exploiting,
  title={{Exploiting both pilots and data payloads for integrated sensing and communications}},
  author={Xu, Chen and Yu, Xianghao and Liu, Fan and Jin, Shi},
  journal={IEEE Transactions on Wireless Communications},
  year={2025},
  publisher={IEEE}
}

@article{geiger2026constellation,
  title={{Constellation shaping for OFDM-ISAC systems: From theoretical bounds to practical implementation}},
  author={Geiger, Benedikt and Liu, Fan and Lu, Shihang and Rode, Andrej and Gaviria, Daniel Gil and Muth, Charlotte and Schmalen, Laurent},
  journal={IEEE Transactions on Communications},
  year={2026},
  publisher={IEEE}
}

@article{lu2024random,
  title={{Random ISAC signals deserve dedicated precoding}},
  author={Lu, Shihang and Liu, Fan and Dong, Fuwang and Xiong, Yifeng and Xu, Jie and Liu, Ya-Feng and Jin, Shi},
  journal={IEEE Transactions on Signal Processing},
  volume={72},
  pages={3453--3469},
  year={2024},
  publisher={IEEE}
}

@article{liu2022integrated,
	title={{Integrated sensing and communications: Towards dual-functional wireless networks for {6G} and beyond}},
	author={Liu, Fan and Cui, Yuanhao and Masouros, Christos and Xu, Jie and Han, Tony Xiao and Eldar, Yonina C and Buzzi, Stefano},
	journal=IEEE_J_JSAC,
	month={Jun.},
	year={2022},
	volume={40},
	number={6},
	pages={1728-1767},
	publisher={IEEE}
}

@article{zhang2021overview,
  title={{An overview of signal processing techniques for joint communication and radar sensing}},
  author={Zhang, J Andrew and Liu, Fan and Masouros, Christos and Heath, Robert W and Feng, Zhiyong and Zheng, Le and Petropulu, Athina},
  journal=IEEE_J_STSP,
  volume={15},
  number={6},
  pages={1295--1315},
  year={2021},
  publisher={IEEE}
}

@article{liu2020joint0,
  title={{Joint radar and communication design: Applications, state-of-the-art, and the road ahead}},
  author={Liu, Fan and Masouros, Christos and Petropulu, Athina P and Griffiths, Hugh and Hanzo, Lajos},
  journal=IEEE_J_COM,
  volume={68},
  number={6},
  pages={3834--3862},
  year={2020},
  publisher={IEEE}
}

@article{zhu2023pilot,
  title={{Pilot optimization for OFDM-based ISAC signal in emergency IoT networks}},
  author={Zhu, Wendi and Han, Yanhong and Wang, Li and Xu, Lianming and Zhang, Yuming and Fei, Aiguo},
  journal=IEEE_J_IOT,
  volume={11},
  number={18},
  pages={29600--29614},
  year={2023},
  publisher={IEEE}
}

@article{hua2024integrated,
  title={{Integrated sensing and communication: Joint pilot and transmission design}},
  author={Hua, Meng and Wu, Qingqing and Chen, Wen and Jamalipour, Abbas and Wu, Celimuge and Dobre, Octavia A},
  journal=IEEE_J_WCOM,
  year={2024},
  publisher={IEEE}
}

@article{liu2025uncovering,
  title={{Uncovering the iceberg in the sea: Fundamentals of pulse shaping and modulation design for random ISAC signals}},
  author={Liu, Fan and Xiong, Yifeng and Lu, Shihang and Li, Shuangyang and Yuan, Weijie and Masouros, Christos and Jin, Shi and Caire, Giuseppe},
  journal=IEEE_J_SP,
  year={2025},
  publisher={IEEE}
}

@article{liu2025cp,
  title={{CP-OFDM achieves the lowest average ranging sidelobe under QAM/PSK constellations}},
  author={Liu, Fan and Zhang, Ying and Xiong, Yifeng and Li, Shuangyang and Yuan, Weijie and Gao, Feifei and Jin, Shi and Caire, Giuseppe},
  journal=IEEE_J_IT,
  year={2025},
  publisher={IEEE}
}

@article{sturm2011waveform,
  title={{Waveform design and signal processing aspects for fusion of wireless communications and radar sensing}},
  author={Sturm, Christian and Wiesbeck, Werner},
  journal=IEEE_J_PROC ,
  volume={99},
  number={7},
  pages={1236--1259},
  year={2011},
  publisher={IEEE}
}

@article{keskin2025fundamental,
  title={{Fundamental trade-offs in monostatic ISAC: A holistic investigation towards 6G}},
  author={Keskin, Musa Furkan and Mojahedian, Mohammad Mahdi and Lacruz, Jesus O and Marcus, Carina and Eriksson, Olof and Giorgetti, Andrea and Widmer, Joerg and Wymeersch, Henk},
  journal=IEEE_J_WCOM,
  year={2025},
  publisher={IEEE}
}

@article{zhang2025optimal,
  title={{Optimal Power Allocation for OFDM-Based Ranging Using Random Communication Signals}},
  author={Zhang, Ying and Liu, Fan and Liu, Tao and Jin, Shi},
  journal=IEEE_J_WCOM,
  volume={25},
  pages={6460--6473},
  year={2025},
  publisher={IEEE}
}

@article{du2024reshaping,
  title={{Reshaping the ISAC tradeoff under OFDM signaling: A probabilistic constellation shaping approach}},
  author={Du, Zhen and Liu, Fan and Xiong, Yifeng and Han, Tony Xiao and Eldar, Yonina C and Jin, Shi},
  journal=IEEE_J_SP,
  year={2024},
  publisher={IEEE}
}

@article{mcaulay1971optimal,
  title={{Optimal mismatched filter design for radar ranging, detection, and resolution}},
  author={McAulay, R and Johnson, J},
  journal=IEEE_J_IT,
  volume={17},
  number={6},
  pages={696--701},
  year={1971},
  publisher={IEEE}
}

@article{wojaczek2018reciprocal,
  title={{Reciprocal-filter-based STAP for passive radar on moving platforms}},
  author={Wojaczek, Philipp and Colone, Fabiola and Cristallini, Diego and Lombardo, Pierfrancesco},
  journal=IEEE_J_AES,
  volume={55},
  number={2},
  pages={967--988},
  year={2018},
  publisher={IEEE}
}

@article{rodriguez2023supervised,
  title={{Supervised reciprocal filter for OFDM radar signal processing}},
  author={Rodriguez, Javier Trujillo and Colone, Fabiola and Lombardo, Pierfrancesco},
  journal=IEEE_J_AES,
  volume={59},
  number={4},
  pages={3871--3889},
  year={2023},
  publisher={IEEE}
}

@article{han2025sensing,
  author={Han, Kawon and Meng, Kaitao and Masouros, Christos},
  title={{Sensing-Secure ISAC: Ambiguity Function Engineering for Impairing Unauthorized Sensing}}, 
  journal=IEEE_J_WCOM,
  year={2025},
  publisher={IEEE}
}

@article{simko2013adaptive,
  title={{Adaptive pilot-symbol patterns for MIMO OFDM systems}},
  author={Simko, Michal and Diniz, Paulo SR and Wang, Qi and Rupp, Markus},
  journal={IEEE Transactions on wireless communications},
  volume={12},
  number={9},
  pages={4705--4715},
  year={2013},
  publisher={IEEE}
}

@article{vsimko2012optimal,
  title={{Optimal pilot symbol power allocation under time-variant channels}},
  author={{\v{S}}imko, Michal and Wang, Qi and Rupp, Markus},
  journal={EURASIP Journal on Wireless Communications and Networking},
  volume={2012},
  number={1},
  pages={225},
  year={2012},
  publisher={Springer}
}

@inproceedings{han2026constellation,
  title={{Constellation design in OFDM-ISAC over data payloads: From MSE analysis to experimentation}},
  author={Han, Kawon and Meng, Kaitao and Chatzicharistou, Alexandra and Masouros, Christos},
  booktitle={ICC 2026-IEEE International Conference on Communications},
  pages={1--6},
  year={2026},
  organization={IEEE}
}

@article{keskin2021mimo,
  title={{MIMO-OFDM joint radar-communications: Is ICI friend or foe?}},
  author={Keskin, Musa Furkan and Wymeersch, Henk and Koivunen, Visa},
  journal=IEEE_J_STSP,
  volume={15},
  number={6},
  pages={1393--1408},
  year={2021},
  publisher={IEEE}
}

@article{luo2025isac,
  title={{ISAC--A Survey on Its Layered Architecture, Technologies, Standardizations, Prototypes and Testbeds}},
  author={Luo, Xuewen and Lin, Qingfeng and Zhang, Ruoyu and Chen, Hsiao-Hwa and Wang, Xingwei and Huang, Min},
  journal={IEEE Communications Surveys \& Tutorials},
  year={2025},
  publisher={IEEE}
}

@article{caire1998bit,
  title={Bit-interleaved coded modulation},
  author={Caire, Giuseppe and Taricco, Giorgio and Biglieri, Ezio},
  journal={IEEE transactions on information theory},
  volume={44},
  number={3},
  pages={927--946},
  year={1998},
  publisher={IEEE}
}

@article{xu2025does,
  title={{How does CP length affect the sensing range for OFDM-ISAC?}},
  author={Xu, Xiaoli and Zhou, Zhiwen and Zeng, Yong},
  journal=IEEE_J_SP,
  volume={73},
  pages={5106--5120},
  year={2025},
  publisher={IEEE}
}

@article{sarkar1995using,
  title={{Using the matrix pencil method to estimate the parameters of a sum of complex exponentials}},
  author={Sarkar, Tapan K and Pereira, Odilon},
  journal={IEEE Antennas and propagation Magazine},
  volume={37},
  number={1},
  pages={48--55},
  year={1995},
  publisher={IEEE}
}

@article{wu2026ambiguity,
  title={{Ambiguity Function Analysis of Pilot-Embedded Random OFDM Signals}},
  author={Wu, Jialin and Liu, Fan and Zhang, Ying and Xiong, Yifeng and Yang, Jie and Jin, Shi},
  journal={arXiv preprint arXiv:2607.17663},
  year={2026}
}

@techreport{3gpp38211,
  author      = {{3GPP}},
  title       = {{NR; Physical Channels and Modulation}},
  institution = {3rd Generation Partnership Project (3GPP)},
  number      = {TS 38.211},
  version     = {15.4.0},
  year        = {2019},
  month       = apr
}
\end{document}